\documentclass[10pt,journal]{IEEEtran}

\usepackage[T1]{fontenc}
\usepackage[utf8]{luainputenc}
\usepackage{float}
\usepackage{amsmath}
\usepackage{amssymb}
\usepackage{graphicx}
\usepackage{esint}
\usepackage{cite}
\usepackage{color}
\usepackage{xcolor}
\usepackage{subfig}
\usepackage{epstopdf}
\newtheorem{prop}{Proposition}

\begin{document}

\title{Diffusion Maps Kalman Filter for a Class of Systems with Gradient Flows}

\author{Tal~Shnitzer,~\IEEEmembership{Student Member,~IEEE}, Ronen~Talmon,~\IEEEmembership{Member,~IEEE}, Jean-Jacques~Slotine
\thanks{Tal~Shnitzer and Ronen~Talmon are with the Department of Electrical Engineering, Technion -- Israel Institute of Technology, Technion City, Haifa, Israel 32000 (e-mail: shnitzer@campus.technion.ac.il; ronen@ee.technion.ac.il). 
}
\thanks{Jean-Jacques E. Slotine is with the Nonlinear Systems Laboratory, Massachusetts Institute of Technology, Cambridge, Massachusetts 02139, U.S.A. (e-mail: jjs@mit.edu).}}
\maketitle

\begin{abstract}
In this paper, we propose a non-parametric method for state estimation of high-dimensional nonlinear stochastic dynamical systems, which evolve according to gradient flows with isotropic diffusion. We combine diffusion maps, a manifold learning technique, with a linear Kalman filter and with concepts from Koopman operator theory. 
More concretely, using diffusion maps, we construct data-driven virtual state coordinates, which linearize the system model. Based on these coordinates, we devise a data-driven framework for state estimation using the Kalman filter.
We demonstrate the strengths of our method with respect to both parametric and non-parametric algorithms in three tracking problems. 
In particular, applying the approach to actual recordings of hippocampal neural activity in rodents directly yields a representation of the position of the animals.
We show that the proposed method outperforms competing non-parametric algorithms in the examined stochastic problem formulations. Additionally, we obtain results comparable to classical parametric algorithms, which, in contrast to our method, are equipped with model knowledge.
\end{abstract}

\begin{IEEEkeywords}
Intrinsic modeling, manifold learning, Kalman filter, diffusion maps, non-parametric filtering.
\end{IEEEkeywords}

\section{Introduction}

\IEEEPARstart{I}{n} many real applications, the system model is not accessible and some estimation of it is required. 
State estimation and characterization of stochastic, possibly nonlinear, dynamical systems are therefore widely studied problems. Traditionally, such problems are addressed using classical algorithms, which rely on predefined parametric models. On the one hand, parametric models need to be sufficiently simple to allow accurate parameter estimation from measurements. On the other hand, too simple models often fail to accommodate the complexity of real systems.
This facilitates the development of non-parametric methods \cite{deisenroth2012robust,wang2014bayesian,surana2016linear,brunton2017HAVOK,berry2015nonparametric}.
Particularly in this paper, we take a non-parametric approach and propose a new method to derive the system model in a data-driven manner. 

To demonstrate the primary idea, consider a classical nonlinear Simultaneous Localization and Mapping (SLAM) problem, where the goal is to track the 2D position $\boldsymbol{x}=(x_1,x_2)^T$ of a moving object. Typically, in the bearing-only version of this problem, the accessible system measurements are given by the azimuth of an object, i.e. by $\phi=\mathrm{arc}\tan\left(x_1/x_2\right)$. This nonlinear model, relating the state (position) of the system $\boldsymbol{x}$ to the measurements $\phi$, poses a challenge for processing and analysis, since traditional linear methods cannot be applied.
In \cite{tan2018SLAMl}, this nonlinearity is resolved by constructing virtual measurements, $y$, and measurement mapping, $\boldsymbol{H}$, based on the knowledge of the system properties and model, which allow for the formulation of an equivalent linear problem and the application of a linear time-varying Kalman filter \cite{Kalman1960}. Briefly, since the system state is given by $\boldsymbol{x}=(\sin\phi,\cos\phi)$, linearization is achieved by defining the virtual measurements $y=\boldsymbol{H}\boldsymbol{x}+v$, where $v$ is noise and:
\begin{equation}
\boldsymbol{H} = 
\begin{bmatrix}
\cos\phi & -\sin\phi
\end{bmatrix}
\end{equation}
such that $\boldsymbol{H}\boldsymbol{x}=0$.
The resulting measurement equation is linear and can be constructed from the given measurements $\phi$.
In this work, analogously to the virtual measurements that linearize the system, 
we  propose a computational method to construct the data-driven non-parametric counterparts of a virtual state, linearizing the system dynamics and measurement model. However, in contrast to \cite{tan2018SLAMl}, our construction of this virtual state is data-driven and does not require any explicit knowledge of the system properties or measurement modality, e.g. the knowledge that the measurements represent the azimuth in a 2D tracking problem.

One way to devise such a computational method, which has recently drawn significant research attention, is to address the problem of data-driven system analysis and state estimation from an operator-theoretic point of view. In this approach, the dynamical system is described by two dual operators, the Perron-Frobenius operator, which represents probability density evolution, and the Koopman Operator, which is defined on some linear functional space of infinite dimension and describes the time-evolution of observables \cite{mezic2005spectral,budivsic2012applied}. 

In the context of empirical dynamical systems analysis, the main challenge is to approximate these operators from the system measurements.
Several methods for estimating the Koopman Operator have been proposed in recent years \cite{Williams2015edmd,budivsic2012applied,mezic2004comparison,mohr2014construction,ulam1960problems,klus2015numerical}. 
For example, the Extended Dynamic Mode Decomposition (EDMD) \cite{Williams2015edmd,budivsic2012applied} approximates the Koopman eigenfunctions and modes based on two sets of points, related through system dynamics, and a set of dictionary elements. However, the optimal choice of the dictionary in EDMD depends on the data \cite{Williams2015edmd}. 
This framework for estimating the Koopman eigenfunctions and modes was later employed in \cite{surana2016linear} as part of a non-parametric Kalman filtering framework, where a linear Kalman filter is constructed based on the approximation of the Koopman Operator, along with its eigenvectors, eigenvalues and Koopman modes using EDMD. The Kalman filter propagates the system in the space spanned by the Koopman eigenvectors and then the resulting estimates are projected back to the state space.
Other work on the analysis of nonlinear stochastic dynamical systems based on Koopman operator theory includes \cite{mezic2017koopmanRandom,mezic2004comparison,mohr2014construction}. The work in \cite{mezic2017koopmanRandom} offers a formal definition and rigorous mathematical analysis for the generalization of the Koopman operator to stochastic dynamical systems. In addition, a new framework for approximating the eigenfunctions and eigenvalues of this stochastic Koopman operator is presented.
A different approach is proposed in \cite{mezic2004comparison} and \cite{mohr2014construction}, where the authors characterize the long term behavior of a system (asymptotic dynamics) based on time-averages of functions. In \cite{mezic2004comparison}, invariant-measures are defined and it is shown that these measures coincide with the eigenfunctions of the Koopman Operator and can be simply calculated by the Fourier transform. In \cite{mohr2014construction}, this framework is extended to address dynamical systems which are not measure-preserving using Laplace averages. In both \cite{mezic2004comparison} and \cite{mohr2014construction}, several trajectories of the system are needed for the analysis. 

Different related non-parametric frameworks for state estimation in stochastic dynamical systems based on geometry and manifold learning were presented in \cite{berry2014nonparametric,berry2015nonparametric,hamilton2016ensemble,giannakis2015data}. In \cite{berry2015nonparametric} and \cite{berry2014nonparametric} a probabilistic approach is proposed, in which the problem is projected onto coordinates constructed by diffusion maps \cite{Coifman2006}, a manifold learning technique. In these diffusion maps coordinates, the probability density function of the system state can be propagated in time without prior knowledge of the system dynamics, yet, the underlying system state is assumed to be accessible.
The work in \cite{hamilton2016ensemble} proposes to construct an ensamble Kalman filter based on delay embedding coordinates (Takens embedding \cite{takens1981dynamical}), used for dynamics estimation. The propagation in time is estimated based on nearest neighbors of the current time-lag frame.
In \cite{giannakis2015data}, a framework for estimating the eigenfunctions of the Koopman Operator generator based on diffusion maps is presented. This work discusses the relationship between diffusion maps and the Koopman operator and proposes to estimate the Koopman generator eigenfunctions using the eigenfunctions of the Laplace-Beltrami operator approximated by diffusion maps.

The relationship between diffusion maps and the Koopman Operator is further discussed in \cite{mezic2005spectral}. There, it is shown that EDMD can also provide data-driven dimensionality reduction. Moreover, for systems in which the system state is described by a Markov process, the eigenfunctions of the backward Fokker-Planck operator can be approximated using EDMD, similarly to diffusion maps. The main benefit of such a manifold learning approach using EDMD is that both the dynamics and the geometry of the underlying state are taken into account. 

In this paper, we present a framework for state estimation of stochastic dynamical systems, with state dynamics of gradient flows and isotropic diffusion, which reveals the system model with minimal prior assumptions, using diffusion maps \cite{Coifman2006} and the Kalman filter. 
We assume that we are given a set of noisy measurements from some unknown nonlinear function of a stochastic underlying state and show that a \emph{linear} model describing the system can be revealed, even for highly nonlinear systems. This is obtained in a completely data-driven manner, based on virtual state coordinates constructed by diffusion maps and their inherent dynamics \cite{coifman2008diffusion}.
A Kalman filter is then formulated based on the recovered system model and utilized for non-parametric state estimation. By constructing the Kalman filter based on the recovered model, we incorporate system dynamics into diffusion maps, combining geometry and dynamics, as in \cite{mezic2005spectral}, from a new manifold learning standpoint.
We further show that our method uncovers an operator describing the system dynamics, which is analogous to the Stochastic Koopman Operator \cite{mezic2005spectral}, an extension of the Koopman Operator for stochastic systems. Moreover, our devised method is well suited for high-dimensional systems due to the nonlinear dimensionality reduction obtained by diffusion maps.

The framework in this paper is tightly related to \cite{shnitzer2016}, in which a linear observer was constructed by exploiting similar properties of the diffusion maps coordinates and their dynamics. We show that the Kalman filter framework introduced in the present work outperforms the observer framework both by reducing the amount of hyper-parameters and by significantly improving the results in certain scenarios.

With respect to previous work, our method encompasses several key differences. First, it does not rely on accessibility to the state of the system (as assumed in \cite{berry2014nonparametric,berry2015nonparametric}).
Second, it does not require predefined dictionary elements (as required in \cite{Williams2015edmd}). Third, our method addresses stochastic dynamical systems in contrast to the deterministic systems considered in \cite{hamilton2016ensemble,surana2016linear}. Finally, it does not require a training set of samples with known states, in contrast to common non-parametric filtering algorithms \cite{deisenroth2012robust,wang2014bayesian}.

The remainder of the paper is organized as follows. In Section \ref{sec:Formulation}, the general problem setting is presented. In Section \ref{sec:Theory}, we first overview the key-points of our method and then describe the derivation of the Diffusion Maps Kalman filter framework in detail. In Section \ref{sec:Results}, we demonstrate the strengths of our method on two object tracking problems. We compare our method with both parametric and non-parametric methods. Section \ref{sec:Conclusion} concludes the paper with a short summary.

\section{Problem Formulation\label{sec:Formulation}}

Consider an ergodic stochastic dynamical system with some nonlinear generator $S:\mathcal{M}\rightarrow \mathcal{M}$, where $\mathcal{M}$ is a compact Riemannian manifold of dimension $d$. The system is defined by:
\begin{eqnarray}
\dot{\boldsymbol{\theta}}_{t} & = & S\left(\boldsymbol{\theta}_{t},\boldsymbol{u}_{t}\right)\label{eq:DynGen}\\
\boldsymbol{z}_{t} & = & g\left(\boldsymbol{\theta}_{t}\right)+\boldsymbol{v}_{t}\label{eq:MeasEq}
\end{eqnarray}
where $\boldsymbol{\theta}_t\in\mathcal{M}$ is the system state, $\dot{\boldsymbol{\theta}}_t$ is its time derivative, $\boldsymbol{u}_t\in\mathbb{R}^d$ is the process noise, $\boldsymbol{z}_t\in\mathbb{R}^m$ are the system measurements through some unknown nonlinear function $g$ and $\boldsymbol{v}_t\in\mathbb{R}^m$ is the measurement noise.
The evolution in time of such a dynamical system can be described by the Stochastic Koopman Operator \cite{mezic2005spectral}, which is defined by
\begin{equation}
\left(U_{st}f\right)\left(\boldsymbol{\theta}_t\right)=\mathbb{E}\left[f\circ T\left(\boldsymbol{\theta}_t,\boldsymbol{u}_t\right)\right]
\end{equation}
where $T$ is the flow induced by $S$, $\circ$ is the composition operator, $f:\mathcal{M}\rightarrow\mathbb{R}$ are some observables from an infinite dimensional vector space, closed under composition with $T$, and $\mathbb{E}$ denotes expectation.

One of the notable properties of the Koopman Operator, which has increased its usage in a line of recent work, is that it is linear in the space of observables, even for highly nonlinear dynamical systems: $U_{st}\left(\alpha_1 f_1+\alpha_2 f_2\right)=\mathbb{E}\left[\left(\alpha_1 f_1+\alpha_2 f_2\right)\circ T\right]=\alpha_1\mathbb{E}\left[f_1\circ T\right]+\alpha_2\mathbb{E}\left[f_2\circ T\right]=\alpha_1 U_{st}f_1+\alpha_2 U_{st}f_2$.
However, the tradeoff is that even for finite dimensional dynamical systems, the Koopman Operator acts on an infinite dimensional space of observables.

In this work, we focus on the case in which the dynamics equation \eqref{eq:DynGen} takes the form of a Langevin equation: 
\begin{equation}
\dot{\boldsymbol{\theta}}_{t}=-\nabla U\left(\boldsymbol{\theta}_t\right) + \sqrt{\frac{2}{\beta}}\dot{\boldsymbol{u}}_t\label{eq:Langevin}
\end{equation}
where $U\left(\boldsymbol{\theta}_t\right)$ is a smooth and bounded potential, $\sqrt{\frac{2}{\beta}}$ is a constant diffusion coefficient, $\boldsymbol{u}_t$ is Brownian motion and $\dot{\boldsymbol{u}}_t$ is its time derivative. 

The goal in this work is to build a new coordinate system representing the hidden system state $\boldsymbol{\theta}_t$ and devise a filtering framework in these new coordinates, given only the measurements $\boldsymbol{z}_t$, without any prior knowledge on the system equations \eqref{eq:DynGen} and \eqref{eq:MeasEq}.

\section{Diffusion-Based Kalman Filter\label{sec:Theory}}
In this section we lay the foundation for our proposed framework and present the theoretical results which allow for the construction of a data-driven linear Kalman filter.

\subsection{Overview\label{sub:Overview}}
We present a method based on diffusion maps that discovers a new coordinate system describing a model of the state of the system with \emph{linear} drift in a completely data-driven manner. 
Based on this linear drift, our method constructs a linear operator, analogous to the Stochastic Koopman Operator, from measurements, without prior model knowledge.
By exploiting the linearity of this operator, we will formulate a linear Kalman filter framework, which allows for estimation of trajectories of the underlying system state based on noisy nonlinear measurements.
Now we will briefly overview the key points of our method, which is described in detail in the following subsections.

Given noisy measurements $\boldsymbol{z}_t\in\mathbb{R}^m$, we use diffusion maps to represent the system state $\boldsymbol{\theta}_t$ by a new set of $k$ coordinates, denoted by $\boldsymbol{\Phi}_t$, where $k<m$. We will show that in this new coordinate system, the dynamical system can be described by the following \emph{linear} equations:
\begin{eqnarray}
\boldsymbol{\dot{\Phi}}_t & = & F\boldsymbol{\Phi}_t + Q_t^{1/2}\boldsymbol{\dot{\omega}}_t\label{eq:dynKalmanOverview}\\
\boldsymbol{z}_t & = & H\boldsymbol{\Phi}_t + R_t^{1/2}\boldsymbol{v}_t\label{eq:measKalmanOverview}
\end{eqnarray}
where $F$ is a linear operator describing the linear drift of the dynamics of the new coordinates $\boldsymbol{\Phi}_t$, $H$ is a linear lift operator from the new coordinates $\boldsymbol{\Phi}_t$ to the measurements $\boldsymbol{z}_t$, $\boldsymbol{\dot{\omega}}_t$ is a standard normally distributed noise process, $\boldsymbol{v}_t$ is measurement noise, and $Q_t$ and $R_t$ are matrices determining the covariance of the driving and measurement noise processes, respectively. 

We will further show that the linear operators $F$ and $H$, in \eqref{eq:dynKalmanOverview} and \eqref{eq:measKalmanOverview} respectively, can be constructed using diffusion maps in a data-driven manner, providing a coarse approximation of the system. Yet, this construction can be further improved since diffusion maps ignore the inherent time-dependencies between consecutive samples. 
Therefore, we formulate a linear Kalman filter using the constructed coordinates $\boldsymbol{\Phi}_t$ and the recovered system operators $F$ and $H$, thereby improving the state estimate by incorporating the dynamics into the diffusion maps coordinates. This leads to a data-driven linear filtering framework, which can be applied to nonlinear systems with an unknown model, revealing a new representation of the system $\boldsymbol{\widehat{\Phi}}_t$, which is tightly related to the system state $\boldsymbol{\theta}_t$ as will be described in Subsection \ref{sub:StateRecovery}.
In addition, our framework allows for the estimation of specific realizations of system trajectories based on measurements, in contrast to most existing work on the Stochastic Koopman Operator, which only represent the average time-evolution.

The remainder of this section is described as follows.
In Subsection \ref{sub:StateRecovery} and Subsection \ref{sub:ModelRecovery}, we reiterate the derivations presented in \cite{shnitzer2016} for state and model recovery using diffusion maps \cite{Coifman2006}. In Subsection \ref{sub:Kalman}, we present our proposed Kalman filter framework. In Subsection \ref{sub:obsdet}, we discuss the observability and detectability of the proposed framework and in Subsection \ref{sub:Koopman}, we elaborate on the relation between our framework and the Stochastic Koopman Operator. 

\subsection{Recovering the State\label{sub:StateRecovery}}
Given noiseless measurements, $\boldsymbol{z}_t=g\left(\boldsymbol{\theta}_t\right)$, of the hidden system state, $\boldsymbol{\theta}_t$, the following kernel is defined
\begin{equation}
k_{\epsilon}\left(s,t\right)=\exp\left(-\frac{d^2\left(\boldsymbol{z}_s,\boldsymbol{z}_t\right)}{\epsilon^2}\right)\label{eq:contKern}
\end{equation}
where $\epsilon>0$ is a kernel scale, traditionally set to the median of the distances between the measurements, and $d\left(\boldsymbol{z}_s,\boldsymbol{z}_t\right)$ is a distance function between $\boldsymbol{z}_s$ and $\boldsymbol{z}_t$. In our case, we calculated this distance using a modified Mahalanobis distance, first presented in \cite{Singer2008}:
\begin{equation}
d\left(\boldsymbol{z}_s,\boldsymbol{z}_t\right) = \sqrt{\frac{1}{2}\left(\boldsymbol{z}_s-\boldsymbol{z}_t\right)\left(C^{\dagger}_s+C^{\dagger}_t\right)\left(\boldsymbol{z}_s-\boldsymbol{z}_t\right)^T}\label{eq:Mahalanobis}
\end{equation}
where $C^{\dagger}_s$ and $C^{\dagger}_t$ are the psuedo-inverse of the measurement covariance matrices at times $s$ and $t$, respectively.
In \cite{Singer2008}, it was shown that this modified Mahalanobis distance between the measurements approximates the Euclidean distance between the hidden states.

The constructed kernel is then normalized according to
\begin{equation}
p_{\epsilon}\left(s,t\right)=\frac{k_{\epsilon}\left(s,t\right)}{d_{\epsilon}\left(s\right)}
\end{equation}
where $d_{\epsilon}\left(s\right)=\int k_{\epsilon}\left(s,t\right)p_{eq}\left(\boldsymbol{\theta}_t\right)d\boldsymbol{\theta}_t$ and $p_{eq}\left(\boldsymbol{\theta}_t\right)=e^{-U\left(\boldsymbol{\theta}_t\right)}$ is the equilibrium density of the hidden state $\boldsymbol{\theta}_t$.

We can then define the operator $P_{\epsilon}$ by 
\[
	\left(P_{\epsilon}f\right)\left(\boldsymbol{\theta}_s\right)=\int p_{\epsilon}\left(s,t\right)f\left(\boldsymbol{\theta}_t\right)p_{eq}\left(\boldsymbol{\theta}_t\right)d\boldsymbol{\theta}_t
\]
where $f$ is a real function defined on the hidden state $\boldsymbol{\theta}_t$.
Based on $P_{\epsilon}$ we define
\begin{equation}
L_{\epsilon}=\frac{P_{\epsilon}-I}{\epsilon}\label{eq:OpConv}
\end{equation}
where $I$ is the identity operator. In \cite{Singer2008} it was shown that for hidden states $\boldsymbol{\theta}_t$ with dynamics as in \eqref{eq:Langevin}, the operator $L_{\epsilon}$ converges to the backward Fokker-Planck operator $\mathcal{L}$ defined on the manifold $\mathcal{M}$, as $\epsilon\rightarrow 0$:
\begin{equation}
\mathcal{L}q=\frac{1}{\beta}\Delta q-\nabla q\cdot\nabla U
\end{equation}
where $q$ denotes functions in a subspace of observables defined on the system state, which describe averages of functions: $q\left(\boldsymbol{\theta}_t\right)=\mathbb{E}\left[h\left(\boldsymbol{\theta}_t\right)\mid\boldsymbol{\theta}_0=\boldsymbol{a}_0\right]$, where $\boldsymbol{a}_0$ is some initial state, and $h$ is smooth.

The operator $L_\epsilon$ has a discrete set of decreasing eigenvalues, $\{-\lambda^{(\ell)}\}_{\ell\in\mathbb{N}}$, $0=-\lambda^{(0)}>-\lambda^{(1)}\geq-\lambda^{(2)}\geq ...$, and eigenfunctions, $\{\boldsymbol{\phi}^{(\ell)}\}_{\ell\in\mathbb{N}}$ \cite{Coifman2006}. In many stochastic systems, these eigenvalues exhibit a spectral gap, implying on only a few dominant eigenvalues (which are close to $0$) \cite{coifman2008diffusion}. In such systems, we can represent the hidden state using only the principal eigenfunctions, corresponding to the largest (non-trivial $\ell\neq 0$) eigenvalues.
The diffusion maps coordinates are then obtained by calculating the eigenvalue decomposition of $L_{\epsilon}$ and using the $k$ eigenfunctions corresponding to the $k$ largest eigenvalues:
\begin{equation}
\boldsymbol{z}_t\mapsto\boldsymbol{\Phi}\left(\boldsymbol{\theta}_t\right)=\left[\phi^{(1)}\left(\boldsymbol{\theta}_t\right),\phi^{(2)}\left(\boldsymbol{\theta}_t\right),...,\phi^{(k)}\left(\boldsymbol{\theta}_t\right)\right]\label{eq:DMCoords}
\end{equation}

It was shown in \cite{Singer2008} that when using a kernel based on the modified Mahalanobis distance \eqref{eq:Mahalanobis}, the eigenfunctions of $L_\epsilon$ converge to the eigenfunctions of the backward Fokker-Planck operator $\mathcal{L}$, defined on the \emph{hidden state} $\boldsymbol{\theta}_t$, as $\epsilon \rightarrow 0$.

The diffusion maps coordinates in \eqref{eq:DMCoords} are tightly related to the system state.
Consider the adjoint of $\mathcal{L}$, which is known as the forward Fokker-Planck operator. The forward Fokker-Planck operator exhibits two important properties. First, it describes the evolution of the transition probability density, $p\left(\boldsymbol{\theta},t\mid\boldsymbol{\theta}_0\right)$. Second, it is Hermitian, and therefore, its eigenfunctions
form a basis for the space of real functions of the system state (with the equilibrium density as a measure $p_{eq}\left(\boldsymbol{\theta}\right)=\exp^{-U\left(\boldsymbol{\theta}\right)}$). 
In addition, since the backward and forward operators are adjoint, their eigenfunctions can be normalized to be bi-orthonormal, and the eigenfunctions of the backward operator can be used as a bi-orthonormal basis.
Due to this connection between the eigenfunctions of the backward and forward operators and since the eigenfunctions of the forward operator are tightly related to the system state, the eigenfunctions of the backward Fokker-Planck operator can be used to represent the system state as well. 
Therefore, from this point on, the diffusion maps coordinates, denoted by $\boldsymbol{\phi}^{(\ell)}$, which approximate these eigenfunctions, can be considered as a new set of coordinates representing the hidden system state.

Note that the above results are obtained only when we have access to clean measurements $\boldsymbol{z}_t=g\left(\boldsymbol{\theta}_t\right)$. We will address this issue in Subsection \ref{sub:Kalman}.

\subsection{Recovering the Model\label{sub:ModelRecovery}}
We will now show that the representation of the system state using the diffusion maps coordinates \eqref{eq:DMCoords} can be used to construct a data-driven representation of the system model.

\subsubsection*{The Dynamics of the Diffusion Maps Coordinates}
Consider the state dynamics in \eqref{eq:Langevin} measured through $\boldsymbol{z}_t=g\left(\boldsymbol{\theta}_t\right)$. For such systems, based on It\^{o}'s Lemma, the eigenfunctions of the operator $L_\epsilon$,
obtained using the diffusion maps algorithm, evolve according to a stochastic differential equation (SDE) of known form \cite{coifman2008diffusion}:
\begin{equation}
\dot{\phi}^{(\ell)}\left(\boldsymbol{\theta}\right)=-\lambda^{(\ell)}\phi^{(\ell)}\left(\boldsymbol{\theta}\right)+\sqrt{2}\Vert\nabla_{\boldsymbol{\theta}}\phi^{(\ell)}\left(\boldsymbol{\theta}\right)\Vert_2\dot{\omega}^{(\ell)}\label{eq:DMdyn}
\end{equation}
where $\phi^{(\ell)}\left(\boldsymbol{\theta}\right)$ is the $\ell$th eigenfunction of the operator $L_\epsilon$, $-\lambda^{(\ell)}$ is the corresponding eigenvalue, $\omega^{(\ell)}$ is the $\ell$th coordinate of a multidimensional Brownian motion process, where each coordinate is some linear combination of the Brownian motion coordinates of the underlying process, $u^{(\ell)}$, and $\Vert\cdot\Vert_2$ is the $L^2$ norm.
Note that we omitted the time notation of the state ($\boldsymbol{\theta}_t$) to indicate that the eigenfunctions are dependent only on the state.

This SDE depicts that the obtained diffusion maps coordinates evolve according to a linear drift, $-\lambda^{(\ell)}\phi^{(\ell)}\left(\boldsymbol{\theta}\right)$, and an additional diffusion component. We note that the linear drift component is fully known since we have, using diffusion maps, both the eigenfunctions and the corresponding eigenvalues of the operator $L_\epsilon$, which approximates the backward Fokker-Planck operator.
This linear drift will be used as the new \emph{linear} state dynamics.

\subsubsection*{Constructing the Lift Function}
As stated in Subsection \ref{sub:StateRecovery}, the eigenfunctions of the backward Fokker-Planck operator, approximated by diffusion maps, form a basis for the space of real functions defined on the system state. Therefore, every real function of the system state can be written as an expansion in these eigenfunctions. Specifically, we can represent the measurement function in the following manner:
\begin{equation}
z_t^{(i)} = g^{(i)}\left(\boldsymbol{\theta}_t\right)=\sum_{\ell=0}^{\infty}\alpha_{i,\ell}\phi^{(\ell)}\left(\boldsymbol{\theta}_t\right)\label{eq:liftInf}
\end{equation}
where $\alpha_{i,\ell}=\left\langle z^{(i)},\phi^{(\ell)}\right\rangle_{p_{eq}}=\int_{\mathcal{M}}z^{(i)}_t\phi^{(\ell)}\left(\boldsymbol{\theta}_t\right)p_{eq}\left(\boldsymbol{\theta}_t\right)d\boldsymbol{\theta}_t$.
When the eigenvalues of the backward Fokker-Planck operator exhibit a spectral gap, most of the energy is captured by the $k$ eigenfunctions, corresponding to the $k$ largest eigenvalues. In such cases we can approximate the mapping in \eqref{eq:liftInf} using only these $k$ eigenfunctions:
\begin{equation}
z_t^{(i)}\approx\sum_{\ell=0}^{k}\alpha_{i,\ell}\phi^{(\ell)}\left(\boldsymbol{\theta}_t\right)\label{eq:liftFin}
\end{equation}

We can now write expression \eqref{eq:liftFin} in matrix form:
\begin{equation}
\boldsymbol{z}_t\approx\boldsymbol{\alpha}\boldsymbol{\Phi}\left(\boldsymbol{\theta}_t\right)\label{eq:liftFinFinal}
\end{equation}
where $\boldsymbol{\Phi}\left(\boldsymbol{\theta}_t\right)$ is defined in \eqref{eq:DMCoords} and $\boldsymbol{\alpha}$ is an $m\times k$ matrix in which $\left(\boldsymbol{\alpha}\right)_{i,\ell}=\alpha_{i,\ell}$.

These results imply that through the representation of the measurement function using the diffusion maps eigenfunctions, we obtain a linear mapping between the eigenfunctions, $\boldsymbol{\Phi}\left(\boldsymbol{\theta}_t\right)$, and the system measurements, $\boldsymbol{z}_t$. Thus, we set the linear lift function simply to be $\boldsymbol{\alpha}$.

To conclude this subsection, we note that all of the theoretical derivations above are true for state equations of the form \eqref{eq:Langevin}. However, it is not essential that specifically the state will exhibit such dynamics, but rather that the dynamics of some underlying system parameter are governed by the Langevin equation. In such systems, the diffusion maps coordinates, constructed using the modified Mahalanobis distance \eqref{eq:Mahalanobis}, exhibit useful properties and can be used as a foundation for the proposed time-series filtering framework, as described in the remainder of this paper.
Since many natural phenomena are governed by dynamics that can be modeled using Langevin equation \eqref{eq:Langevin}, our framework is applicable to a wide range of problems.

\subsection{Diffusion Maps Kalman Filter\label{sub:Kalman}}
Based on the Euler-Maruyama method, the dynamics of the diffusion maps coordinates can be discretized to
\begin{eqnarray}
\phi^{(\ell)}\left(\boldsymbol{\theta}_{n+1}\right) & = & \left(1-\lambda^{(\ell)}\Delta t\right)\phi^{(\ell)}\left(\boldsymbol{\theta}_{n}\right)\nonumber\\
& + & \sqrt{2}\Vert\nabla_{\boldsymbol{\theta}}\phi^{(\ell)}\left(\boldsymbol{\theta}_n\right)\Vert_2\Delta\omega_n^{(\ell)}\label{eq:discreteDyn}
\end{eqnarray}
where $\Delta t$ is the time step, $\Delta\omega_t^{(\ell)}$ is a normally distributed noise process and $\nabla_{\boldsymbol{\theta}}$ denotes the gradient according to the system state $\boldsymbol{\theta}_n$.
To approximate the linear drift in \eqref{eq:discreteDyn} from discrete measurements, discrete approximations of $\phi^{(\ell)}\left(\boldsymbol{\theta}_{n}\right)$ and $\lambda^{(\ell)}$ are required.
These approximations can be obtained using the discrete counterpart of diffusion maps as follows.

Given discrete-time measurements $\{\boldsymbol{z}_n\}_{n=1}^N$,
the $N\times N$ kernel matrix is calculated similarly to \eqref{eq:contKern} by:
\begin{equation}
K(i,j)=\exp\left(-\frac{d^2\left(\boldsymbol{z}_i,\boldsymbol{z}_j\right)}{\epsilon^2}\right)\label{eq:discreteKern}
\end{equation}
where $\epsilon>0$ is the kernel scale and $d(\cdot,\cdot)$ is the modified Mahalanobis distance \eqref{eq:Mahalanobis}.

The kernel matrix is then normalized to be row-stochastic
\begin{equation}
P(i,j) = \frac{K(i,j)}{D(i)}\label{eq:discreteNormKern}
\end{equation}
where $D(i)=\sum_{j=1}^N K(i,j)$.

From the eigenvalue-decomposition of $P$ we obtain a set of eigenvectors and eigenvalues, $\{\boldsymbol{\psi}^{(\ell)}\}_{\ell=0}^{k}$, $\{\mu^{(\ell)}\}_{\ell=0}^{k}$.
It was shown in \cite{Coifman2006}, that in this discrete setting, the diffusion maps eigenvectors, $\{\boldsymbol{\psi}^{(\ell)}\}_{\ell=0}^{k}$, approximate the continuous diffusion maps eigenfunctions, $\{\boldsymbol{\phi}^{(\ell)}\}_{\ell=0}^{k}$, discussed in the previous subsections. Moreover, it was shown in \cite{Singer2008}, that the eigenvalues of the discrete diffusion maps algorithm can be used to approximate the eigenvalues of the continuous operator according to $-\lambda^{(\ell)}=\frac{2}{\epsilon}\log\mu^{(\ell)}$.  

We rewrite the state and measurement equations in the discrete setting using the diffusion maps coordinates:
\begin{eqnarray}
\boldsymbol{\Psi}_{n+1} & = & \left(I+\Lambda\right)\boldsymbol{\Psi}_n+Q_n^{1/2}\Delta\boldsymbol{\omega}_n\label{eq:dmState}\\
\boldsymbol{z}_n & = & \boldsymbol{\alpha}\boldsymbol{\Psi}_n+R_n^{1/2}\boldsymbol{v}_n\label{eq:dmMeas}
\end{eqnarray}
where $\boldsymbol{\Psi}_n=[\psi^{(1)}_n,\psi^{(2)}_n,\dots,\psi^{(k)}_n]$, $\Lambda$ is a diagonal matrix with $\{-\lambda^{(\ell)}\Delta t\}_{\ell=1}^{k}$ as its diagonal elements, $I$ is the identity matrix, $Q_n^{1/2}$ is a matrix containing the coefficients of the second term in \eqref{eq:discreteDyn}, $\Delta\boldsymbol{\omega}_n$ is a vector of standard normally distributed noise processes, $R_n$ is the covariance of the measurement noise and $\boldsymbol{\alpha}$ is the lift function from the diffusion maps coordinates to the measurements. 
For simplicity of notation, we denote $\boldsymbol{\Psi}_n=\boldsymbol{\Psi}(\boldsymbol{\theta}_n)$, omitting the dependence on $\boldsymbol{\theta}_n$.
Importantly, note that these system equations are approximately linear, even for highly nonlinear systems. Therefore, using the diffusion maps coordinates, we obtain a virtual system state which linearizes the problem, in an entirely data-driven manner, given only the measurements.
In the discrete setting, the lift function is approximated by $\alpha_{i,\ell}=\sum_{n=1}^{N}z^{(i)}_n\psi^{(\ell)}\left(\boldsymbol{\theta}_n\right)$ \cite{shnitzer2016}. 

The term $\boldsymbol{\alpha}\boldsymbol{\Psi}_n$ in \eqref{eq:dmMeas} approximates the nonlinear measurement function $g(\boldsymbol{\theta}_n)$. Therefore, the noise term in \eqref{eq:dmMeas}, denoted by $R_n^{1/2}\boldsymbol{v}_n$, is approximately the measurement noise from \eqref{eq:MeasEq}.
However, since the diffusion maps coordinates are constructed based on the noisy measurements, $\boldsymbol{z}_n$, the approximation of the measurement function may be inaccurate, leading to non-Gaussian components in $R_n^{1/2}\boldsymbol{v}_n$.
In regard to this issue, we note that in many cases the leading diffusion maps eigenvectors (corresponding to the largest eigenvalues) represent the slow components in the data, and therefore, are only mildly affected by the noise. This property facilitates the use of our devised Kalman model in many applications, as demonstrated in the experimental study in Section \ref{sec:Results}, even though the model is inaccurate due to the measurement noise.
We further note that the noise term in \eqref{eq:dmMeas} can be used to represent deviations from the true model as described in \cite{bulut2012kalman}.
However, significant measurement noise may still lead to some deterioration in the performance of the proposed method, as demonstrated in Subsection \ref{sub:Toy1}.

Due to the linearity of the derived system in \eqref{eq:dmState} and \eqref{eq:dmMeas} (except for elements in $Q_n$ as discussed in a subsequent paragraph), we can construct a linear Kalman filter based on the diffusion maps coordinates. 
Using the Kalman filter framework we harness the linear dynamics of the diffusion maps eigenfunctions to improve the approximation of the eigenvectors of the discrete diffusion maps algorithm.
Therefore, our framework incorporates the inherent time-dependencies of the system samples into a manifold learning technique.

Our proposed framework is highly related to the method presented in \cite{shnitzer2016}, where a linear observer was constructed based on the system equations described in \eqref{eq:dmState} and \eqref{eq:dmMeas}, by exploiting the linearity of the data-driven representation in a similar manner.
The Kalman filtering framework improves the linear observer framework in \cite{shnitzer2016} in two main aspects.
First, the constructed observer scheme is a deterministic framework which discards the stochastic term of the dynamics, whereas the Kalman filter takes it naturally into account. Second, the Kalman filter provides an adaptive optimal update of the fixed model parameters in \cite{shnitzer2016}. 
We empirically show in Section \ref{sec:Results}, that the adaptive parameter update and the stochastic framework significantly improve the state recovery and robustness to noise, and outperforms the observer framework as well as other competing methods.

We note that $Q_n$ is a nonlinear function of the state and induces dependencies between different time-samples of the system noise process, which does not fit the Kalman filter framework. However, in many applications, the leading diffusion maps eigenvectors, which are used as a low-dimensional representation of the system, are slowly varying functions of the system state. Therefore, the gradient of $\boldsymbol{\psi}^{(\ell)}$ according to $\boldsymbol{\theta}_n$ in \eqref{eq:discreteDyn} may be sufficiently small (or approximately constant), allowing for the proper use of the Kalman filter. These properties are demonstrated in simulations in Section \ref{sec:Results}.

An alternative approach to address the dependence of the driving noise on the state would be using a particle filter instead of the Kalman filter. Particle filters are designed to support a wider range of driving noise distributions \cite{Arulampalam2002,doucet2000sequential}. However, our empirical study showed that the Kalman filter significantly outperformed the particle filter, defined in a corresponding manner using the derived model, since the particle filter is sensitive to errors in the estimation of the gradient of $\boldsymbol{\psi}^{(\ell)}$.
Improving our methods by appropriately approximating the driving noise remains a subject for future work.

The implemented Kalman filter framework is described as follows:
\begin{eqnarray}
\nonumber
\widehat{\boldsymbol{\Psi}}_{n} & = & F_{n}\widehat{\boldsymbol{\Psi}}_{n-1}+\boldsymbol{\kappa}_{n}\left(\boldsymbol{z}_{n}-H_{n}F_{n}\widehat{\boldsymbol{\Psi}}_{n-1}\right)\\
P_{n} & = & \left(I-\boldsymbol{\kappa}_{n}H_{n}\right)\left(F_{n}P_{n-1}F^T_{n}+Q_{n}\right)\label{eq:Kalman}\\
\boldsymbol{\kappa}_{n} & = & \left(F_{n}P_{n-1}F^T_{n}+Q_{n}\right)H^T_{n}\nonumber\\
& & \left(H_{n}F_{n}P_{n-1}F^T_{n}H^T_{n}+H_{n}Q_{n}H^T_{n}+R_{n}\right)^{-1}\nonumber
\end{eqnarray}
where $\widehat{\boldsymbol{\Psi}}_{n}$ is the state estimate at time $n$, $\boldsymbol{z}_n$ is the measurement at time $n$, $I$ is the identity matrix, $F_{n}=I+\Lambda$ represents the dynamics of the state, $H_{n}=\boldsymbol{\alpha}$ is the lift function between the measurements and the state calculated in a data driven manner as described after equation \eqref{eq:dmMeas}, $Q_{n}$ is the covariance matrix of the state driving noise, and $R_{n}$ is the covariance matrix of the measurement noise.
We refer to the proposed Kalman filter framework as the Diffusion Maps Kalman (DMK).

In the experiments presented in Section \ref{sec:Results}, the covariance matrices, $R_n$ and $Q_n$, were estimated from the data according to $R_n(k,k)=\mathrm{var}\left(\boldsymbol{z}^{(k)}\right)$ and $Q_n(k,k)=\mathrm{var}\left(\lambda^{(k)}\psi^{(k)}\right)$, respectively, where $\psi^{(k)}$ is the $k$th diffusion maps eigenvector corresponding to eigenvalue $\mu^{(k)}$ and $-\lambda^{(k)}=2/\epsilon\log\mu^{(k)}$.
This empirical choice led to good results in all applications. However, it is possibly an underestimation of the covariance and could be further improved. For example the adaptive estimation for $R_n$ and $Q_n$ described in \cite{berry2013adaptiveEK} could be used for unknown system dynamics and measurement function. 
In addition, this issue will be addressed in future work, where a method for properly estimating the variance of the eigenfunctions in \eqref{eq:DMdyn} will be devised.

\subsection{Observability and Detectability of the New System Equations\label{sub:obsdet}}

In this subsection, we discuss the conditions under which the proposed discrete linear system in \eqref{eq:dmState} and \eqref{eq:dmMeas} is observable and detectable.

To address the observability and detectability conditions, we first recall that in the proposed system, the state transition matrix is given by $F=I+\Lambda$, which is a full rank diagonal matrix, and the observation matrix is given by $H=\boldsymbol{\alpha}$, where $\boldsymbol{\alpha}$ is defined in Subsection \ref{sub:ModelRecovery}, after equation \eqref{eq:liftInf}.

\begin{prop}
The system in \eqref{eq:dmState} and \eqref{eq:dmMeas} is observable if $\forall\ell=1,\dots,k, \ \exists p=1,\dots,m$ such that $\left\langle \boldsymbol{z}^{(p)},\boldsymbol{\psi}^{(\ell)}\right\rangle\neq 0$, where $\left\langle \boldsymbol{z}^{(p)},\boldsymbol{\psi}^{(\ell)}\right\rangle=\sum_{n=1}^N z_n^{(p)}\psi^{(\ell)}(\boldsymbol{\theta}_n)$.
\end{prop}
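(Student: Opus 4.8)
The plan is to establish observability directly via the Kalman observability-matrix rank test, exploiting the fact that the pair $(F,H)=(I+\Lambda,\boldsymbol{\alpha})$ is already in modal (diagonal) form. First I would recall that the system \eqref{eq:dmState}--\eqref{eq:dmMeas} is observable precisely when the observability matrix
\[
\mathcal{O}=\begin{bmatrix} H \\ HF \\ \vdots \\ HF^{k-1}\end{bmatrix}
\]
has full column rank $k$. Writing $F=\mathrm{diag}(f_1,\dots,f_k)$ with $f_\ell=1-\lambda^{(\ell)}\Delta t$, I would proceed under the (generic) assumption that the eigenvalues $\{-\lambda^{(\ell)}\}_{\ell=1}^{k}$ are distinct, so that the $f_\ell$ are pairwise distinct.

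Next, since $F$ is diagonal, the $\ell$th column of $HF^{j}$ is simply $f_\ell^{\,j}$ times the $\ell$th column of $H$, namely $\boldsymbol{\alpha}_{:,\ell}=(\alpha_{1,\ell},\dots,\alpha_{m,\ell})^T$. Consequently the $\ell$th column of $\mathcal{O}$ factors as the Kronecker product $[1,f_\ell,\dots,f_\ell^{\,k-1}]^T\otimes\boldsymbol{\alpha}_{:,\ell}$. To show these $k$ columns are linearly independent, I would suppose a vanishing combination $\sum_{\ell}c_\ell(\text{the }\ell\text{th column})=0$ and read off the $p$th entry within the $j$th block, obtaining $\sum_{\ell}c_\ell\,\alpha_{p,\ell}\,f_\ell^{\,j}=0$ for every $j=0,\dots,k-1$ and every $p$. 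For each fixed $p$ this is a homogeneous linear system whose coefficient matrix is the Vandermonde matrix in the distinct nodes $f_\ell$; invertibility then forces $c_\ell\,\alpha_{p,\ell}=0$ for all $\ell$ and $p$.

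Finally I would invoke the hypothesis: for each $\ell$ there exists $p$ with $\alpha_{p,\ell}=\langle \boldsymbol{z}^{(p)},\boldsymbol{\psi}^{(\ell)}\rangle\neq 0$, whence $c_\ell=0$. Since all $c_\ell$ vanish, the columns of $\mathcal{O}$ are independent, $\mathcal{O}$ has rank $k$, and the system is observable. Equivalently, one could run the Popov--Belevitch--Hautus test: for each eigenvalue $f_\ell$ of the diagonal $F$ the eigenvector is $e_\ell$, and the requirement $He_\ell=\boldsymbol{\alpha}_{:,\ell}\neq 0$ is exactly the stated condition, so the two routes coincide.

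The main obstacle is the distinctness of the $f_\ell$, which the statement leaves implicit but which the Vandermonde (equivalently, the per-eigenvalue PBH) step requires. If two eigenvalues coincide, the decoupling breaks down and the nonzero-column condition is no longer sufficient: one would then instead need the columns of $\boldsymbol{\alpha}$ associated with the repeated eigenvalue to be linearly independent. Since diffusion maps with a spectral gap typically yields well-separated leading eigenvalues, I would carry the distinctness of $\{-\lambda^{(\ell)}\}$ as a standing assumption rather than attempt to relax it.
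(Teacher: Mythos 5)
Your proof is correct, and your primary route differs from the paper's. The paper applies the eigenvector (PBH-type) test directly: it cites the criterion that the system is observable iff there is no eigenvector $\boldsymbol{v}^{(\ell)}$ of $F$ with $H\boldsymbol{v}^{(\ell)}=\boldsymbol{0}$, observes that since $F=I+\Lambda$ is diagonal its eigenvectors are the standard basis vectors $\boldsymbol{e}_\ell$, and concludes that $H\boldsymbol{e}_\ell$ is the $\ell$th column of $\boldsymbol{\alpha}$, whose entries are exactly the inner products $\left\langle \boldsymbol{z}^{(p)},\boldsymbol{\psi}^{(\ell)}\right\rangle$ --- which is precisely the alternative you sketch in your closing remark. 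Your main argument instead builds the observability matrix and factors its columns through a Vandermonde system; this is more laborious but makes fully visible where the distinctness of the diagonal entries $1-\lambda^{(\ell)}\Delta t$ enters. That is the most valuable part of your write-up: the paper's proof silently relies on the same assumption, since the claim that the eigenvectors of $F$ ``contain $1$ at index $\ell$ and zeros elsewhere'' characterizes \emph{all} eigenvectors only when the diagonal entries are pairwise distinct; with a repeated eigenvalue the PBH test must be run over the whole eigenspace, and nonvanishing of each individual column of $\boldsymbol{\alpha}$ no longer suffices --- one needs linear independence of the columns associated with that eigenvalue, exactly as you note. So both routes prove the proposition under the (reasonable, spectral-gap-motivated) standing assumption of distinct eigenvalues, with your version surfacing a hypothesis the paper leaves implicit.
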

In other words, for observability it is required that the $k$ diffusion maps eigenvectors, used in the construction of the system, are not orthogonal to all measurement coordinates.
\begin{proof}
In the discrete and linear Kalman filter setting, a system is observable if there are no vectors $\boldsymbol{v}^{(\ell)}\neq\boldsymbol{0}$ such that $F\boldsymbol{v}^{(\ell)}=\gamma^{(\ell)}\boldsymbol{v}^{(\ell)}$ and $H\boldsymbol{v}^{(\ell)}=\boldsymbol{0}$, where $\boldsymbol{v}^{(\ell)}$ and $\gamma^{(\ell)}$ denote the $\ell$th eigenvector and corresponding eigenvalue of $F$, respectively \cite{sontag2013mathematical}.

Since $F$ is a diagonal matrix in \eqref{eq:dmState}, its eigenvectors, $\boldsymbol{v}^{(\ell)}$, contain $1$ at index $\ell$ and zeros elsewhere.
Therefore, $H\boldsymbol{v}^{(\ell)}$ is the $\ell$th column of $H$, which based on Subsection \ref{sub:ModelRecovery}, corresponds to the following vector:
\begin{equation}
H\boldsymbol{v}^{(\ell)}=
\begin{bmatrix}
\left\langle \boldsymbol{z}^{(1)},\boldsymbol{\psi}^{(\ell)}\right\rangle\\
\vdots\\
\left\langle \boldsymbol{z}^{(m)},\boldsymbol{\psi}^{(\ell)}\right\rangle
\end{bmatrix}
\end{equation}
where $\boldsymbol{z}^{(i)}$ denotes the $i$th measurement coordinate, $\boldsymbol{\psi}^{(\ell)}$,$\ell=1,\dots,k$, is the $\ell$th diffusion maps eigenvector and $\left\langle\cdot,\cdot\right\rangle$ denotes the inner product between two vectors. 
From this derivation, for observability we require that $Hv^{(\ell)}\neq 0$, $\forall	\ell=1,\dots,k$, i.e. that the diffusion maps coordinates are not orthogonal to all measurement coordinates.
\end{proof}

\begin{prop}
If $\exists\ell=1,\dots,k$, such that $\left\langle \boldsymbol{z}^{(p)},\boldsymbol{\psi}^{(\ell)}\right\rangle=0$, $\forall p=1,\dots,m$, then the system is detectable if $1-\lambda^{(\ell)}\Delta t<0$, where $-\lambda^{(\ell)}$ is the $\ell$th eigenvalue of the continuous operator, approximated using diffusion maps, $\boldsymbol{\psi}^{(\ell)}$ is the corresponding eigenvector and $\Delta t$ denotes the time step.
\end{prop}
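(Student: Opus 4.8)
The plan is to follow the same eigenvector-based argument used for the previous proposition, but now invoking the \emph{detectability} version of the Popov--Belevitch--Hautus test rather than the observability version. Recall that a discrete linear time-invariant pair $(F,H)$ is detectable precisely when every eigenvector $\boldsymbol{v}^{(\ell)}$ of $F$ that lies in the kernel of $H$, i.e. satisfies $H\boldsymbol{v}^{(\ell)}=\boldsymbol{0}$, is associated with a \emph{stable} eigenvalue $\gamma^{(\ell)}$ of $F$ (one strictly inside the unit circle). Detectability therefore weakens observability by permitting unobservable modes, provided those modes decay; the goal is to show that, under the stated hypothesis, the offending mode is exactly such a decaying mode.

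First I would carry over the computation from the previous proof verbatim. Since $F=I+\Lambda$ is diagonal, its eigenvectors are the standard basis vectors $\boldsymbol{v}^{(\ell)}=\boldsymbol{e}_\ell$, with eigenvalues $\gamma^{(\ell)}=1-\lambda^{(\ell)}\Delta t$, and $H\boldsymbol{e}_\ell$ is simply the $\ell$th column of $H=\boldsymbol{\alpha}$, whose $p$th entry is $\langle\boldsymbol{z}^{(p)},\boldsymbol{\psi}^{(\ell)}\rangle$. The hypothesis that some index $\ell$ satisfies $\langle\boldsymbol{z}^{(p)},\boldsymbol{\psi}^{(\ell)}\rangle=0$ for all $p$ is thus exactly the statement that $H\boldsymbol{e}_\ell=\boldsymbol{0}$, i.e. that $\boldsymbol{e}_\ell$ is an unobservable mode. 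This places us in precisely the situation excluded in the observability proposition, so observability genuinely fails and detectability is the relevant weaker notion.

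The remaining, and central, step is to check that each such unobservable mode decays. Because $F$ is diagonal, the system in \eqref{eq:dmState} decouples into $k$ scalar recursions, and the unobservable coordinate evolves autonomously as $(\gamma^{(\ell)})^n=(1-\lambda^{(\ell)}\Delta t)^n$, uncoupled from the observable coordinates; hence it tends to zero if and only if $\gamma^{(\ell)}$ lies inside the unit disk. The stated sign condition on $1-\lambda^{(\ell)}\Delta t$, together with the ordering $\lambda^{(\ell)}>0$ (which forces $\gamma^{(\ell)}<1$) and the range of $\lambda^{(\ell)}\Delta t$ for which the Euler--Maruyama discretization in \eqref{eq:discreteDyn} is meaningful, confines $\gamma^{(\ell)}$ to the stable region and yields the claim. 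If several indices satisfy the orthogonality hypothesis, the identical scalar argument is applied to each coordinate separately.

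The main obstacle I anticipate is not the algebra, which is inherited from the previous proof, but correctly pinning down the discrete-time stability boundary: one must ensure $|\gamma^{(\ell)}|<1$ rather than merely a continuous-time-style requirement that the rate be negative, so the lower bound $\gamma^{(\ell)}>-1$ (equivalently $\lambda^{(\ell)}\Delta t<2$) has to be in force alongside the stated condition. Verifying that the posited inequality does land $\gamma^{(\ell)}$ strictly inside the unit circle, and that detectability is required simultaneously for \emph{every} unobservable mode and not merely for a single $\ell$, is the only delicate point.
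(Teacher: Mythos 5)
Your argument follows the same skeleton as the paper's: invoke the PBH-type test, note that $F=I+\Lambda$ is diagonal so its eigenvectors are the standard basis vectors with eigenvalues $\gamma^{(\ell)}=1-\lambda^{(\ell)}\Delta t$, identify $H\boldsymbol{e}_\ell$ with the $\ell$th column of $\boldsymbol{\alpha}$, and reduce detectability to a stability condition on the unobservable modes. Where you diverge is in the stability criterion itself, and this is the crux. The paper declares a discrete system detectable when no unobservable eigenvector has $\gamma^{(\ell)}+(\gamma^{(\ell)})^{*}\geq 0$, i.e.\ it excludes unobservable modes with $\mathrm{Re}\,\gamma^{(\ell)}\geq 0$; under that half-plane (continuous-time-style) criterion the claimed condition $1-\lambda^{(\ell)}\Delta t<0$ is immediately equivalent to detectability of the mode, and the proof is one line. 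You instead apply the unit-disk criterion $\bigl|\gamma^{(\ell)}\bigr|<1$ that is standard for discrete-time systems, and you correctly observe that the stated inequality is then \emph{not} sufficient by itself: one must additionally have $1-\lambda^{(\ell)}\Delta t>-1$, i.e.\ $\lambda^{(\ell)}\Delta t<2$, which you justify only informally by appeal to the validity range of the Euler--Maruyama discretization. So your route is the more defensible one for a genuinely discrete-time system and it exposes an imprecision in the proposition as stated (or in the paper's choice of criterion), at the cost of needing an extra hypothesis that the paper never makes explicit; the paper's route buys brevity and an exact match to the proposition's wording, but only because its detectability test is the continuous-time one transplanted to the discrete setting. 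Your closing remark that the condition must hold for \emph{every} unobservable index, not just one, is also a point the paper glosses over.
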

\begin{proof}
A discrete linear system is detectable, if there is no $v^{(\ell)}\neq 0$ and $\gamma^{(\ell)}$ such that $Fv^{(\ell)}=\gamma^{(\ell)}v^{(\ell)}$, $Hv^{(\ell)}=0$ with $\gamma^{(\ell)}+(\gamma^{(\ell)})^*\geq 0$, where $()^*$ denotes the complex conjugate \cite{sontag2013mathematical}.

In the constructed system, the eigenvalues, $\gamma^{(\ell)}$, of $F$ are real and equal to the diagonal elements of $F$, which take the form of $1-\lambda^{(\ell)}\Delta t$.
Therefore, the constructed system is detectable if $1-\lambda^{(\ell)}\Delta t<0$, when $\left\langle \boldsymbol{z}^{(p)},\boldsymbol{\psi}^{(\ell)}\right\rangle=0$, $\forall p=1,\dots,m$, i.e. $\gamma^{(\ell)}<0$ for $\psi^{(\ell)}$, for which $Hv^{(\ell)}=0$.
\end{proof}

Note that by using only diffusion maps coordinates that are not orthogonal to all system measurements in the construction of the system, both propositions hold. 
Therefore, an informed choice of the diffusion maps coordinates leads to an observable and detectable system.
According to common practice, we take the leading $k$ diffusion maps eigenvectors, corresponding to the largest eigenvalues, since these coordinates typically represent the system and are only mildly affected by the noise.
Moreover, the choice of the new system state dimensionality, $k$, can be motivated by the two propositions, since we expect that eigenvectors which \emph{are} orthogonal to the measurements, e.g. due to noise, will correspond to smaller eigenvalues, i.e. will have larger indexes $\ell$.
Empirically, following this procedure led to observable and detectable systems in the experimental study in Section \ref{sec:Results}.

\subsection{Koopman Operator\label{sub:Koopman}}

Our presented framework is tightly related to the Koopman Operator. Specifically, we show the analogy of the revealed dynamics in the present work to the Stochastic Koopman Operator \cite{mezic2005spectral}, $\left(U_{st}f\right)\left(\boldsymbol{\theta}_{n}\right)=\mathbb{E}\left[f\circ T\left(\boldsymbol{\theta}_{n},\boldsymbol{\omega}_n\right)\right]$.
Given measurements from some stochastic nonlinear system with state dynamics of the form of \eqref{eq:Langevin}, we project the problem onto the eigenvector space obtained by diffusion maps. By taking these eigenvectors as observables we obtain a space in which the evolution of the observables is represented by a known linear operator:
\begin{align}
\left(\tilde{U}_{st}\psi^{(\ell)}\right)\left(\boldsymbol{\theta}_{n}\right)=\mathbb{E}&\left[\psi^{(\ell)}\circ T\left(\boldsymbol{\theta}_{n},\boldsymbol{u}_n\right)\mid\boldsymbol{\Psi}_n\right]\nonumber\\
=\mathbb{E}&\Biggl[\left(1-\lambda^{(\ell)}\Delta t\right)\psi^{(\ell)}\left(\boldsymbol{\theta}_{n}\right)\nonumber\\
& +\sqrt{2}\Vert\nabla_{\boldsymbol{\theta}}\psi^{(\ell)}\left(\boldsymbol{\theta}_n\right)\Vert_2\Delta\omega_n^{(\ell)}\mid\boldsymbol{\Psi}_n\Biggr]\nonumber\\
= & \left(1-\lambda^{(\ell)}\Delta t\right)\psi^{(\ell)}\left(\boldsymbol{\theta}_n\right)
\end{align}
where $\tilde{U}_{st}$ is analogous to the Stochastic Koopman Operator. Note that in contrast to the standard Stochastic Koopman Operator, $\tilde{U}_{st}$ is conditioned on $\boldsymbol{\Psi}_n$.

The use of diffusion maps for the approximation of the Koopman Operator is discussed in detail in \cite{giannakis2015data}. In particular a Galerkin method for approximating the eigenfunctions of the Koopman generator using the diffusion maps eigenfunctions and eigenvalues is presented. Furthermore, it was shown that for ergodic systems with pure-point spectra, the eigenfunctions of the Koopman generator can be robustly estimated from finite data using diffusion maps.
Importantly, our method is completely different than the method presented in \cite{giannakis2015data}, since we approximate the Stochastic Koopman Operator, whereas in \cite{giannakis2015data} the generator is approximated.

In our proposed framework, we combine the constructed linear operator and observables, obtained by diffusion maps, with a Kalman filter. This leads to two main benefits. First, instead of representing the average time evolution of the observables in stochastic systems, we obtain an estimation of individual realizations of trajectories from the measurements. In addition, the Kalman filter compensates for the noise and deviations from the measurements. Second, due to the use of diffusion maps, we obtain a data-driven dimensionality reduction and approximate the Koopman Operator based on a finite set of orthonormal functions, spanning the state space of the system \cite{coifman2008diffusion}.

A related work, combining the Koopman Operator and a Kalman filter is presented in \cite{surana2016linear}. There, the authors define the Koopman Observer Form (KOF) for noiseless systems and the Koopman Kalman Filter (KKF) for systems with measurement noise. They construct a set of linear update equations based on the eigenvectors and modes of the Koopman Operator which provides a linear filtering framework for nonlinear systems, where the Koopman Operator of a given data-set is approximated using EDMD \cite{Williams2015edmd}. The EDMD algorithm requires a dictionary of basis functions which affects the resulting estimations \cite{Williams2015edmd}. Conversely, in the proposed work, we obtain the linear dynamics and observables based on the data, from the diffusion maps algorithm without a pre-defined dictionary. Another difference is the problem setting, which includes in the present work stochastic system dynamics rather than measurement noise only. We compare our proposed framework to the one presented in \cite{surana2016linear} in Subsection \ref{sub:Toy1}.

Our framework can also be partially related to the work presented in \cite{brunton2017HAVOK}. There, linear update equations are learned from the data for chaotic systems using concepts from Koopman theory. In order to represent the chaotic dynamics using a finite approximation, a nonlinear forcing term representing the deviation from linearity is added. In our work, we rely on the dynamics of the diffusion maps eigenfunctions, which can be expressed as a sum a linear drift component and a nonlinear stochastic component. The stochastic component represents deviations from the simple linear dynamics and can be used for error analysis as well. In the proposed Kalman filter framework we consider the stochastic component as the system noise.

\section{Experimental Results\label{sec:Results}}

In this section we present two simulated examples of object tracking and a real tracking application based on neuronal spiking activity. All three tracking problems are nonlinear with unknown system dynamics and measurement functions, where each example depicts a different measurement modality.
We compare our proposed Diffusion Maps Kalman (DMK) to several competing algorithms, which are detailed in the following. We show that our DMK framework leads to improved state estimates compared with \emph{non-parametric algorithms}. In addition, it obtains results which are comparable with \emph{parametric methods}, which, in contrast to DMK, are provided with the system model.

\subsection{Nonlinear Object Tracking\label{sub:Toy1}}

We first present a model with Gaussian noise, where the location of an object in a 2-dimensional space is measured through its radius and azimuth angle. The underlying process, describing the Cartesian position of the object at each time point is given by the following discrete time Langevin equations:
\begin{eqnarray}
\Delta\theta^{(1)}_{n+1} & = -\frac{1}{2}\left(\theta_n^{(1)}-1\right)^3+\left(\theta_n^{(1)}-1\right) + \sqrt{2}u^{(1)}_n\nonumber\\
\Delta\theta^{(2)}_{n+1} & = -\frac{1}{2}\left(\theta_n^{(2)}-6\right)^3+\left(\theta_n^{(2)}-6\right) + \sqrt{2}u^{(2)}_n\label{eq:Toy1sys}
\end{eqnarray}
where $u^{(1)}_n$ and $u^{(2)}_n$ denote standard Gaussian noises and the drift terms in these equations describe double-well potentials.
An example of the resulting 2-dimensional process is presented in Figure \ref{fig:Toy1process}

\begin{figure}
\centering
\includegraphics[width=0.4\textwidth]{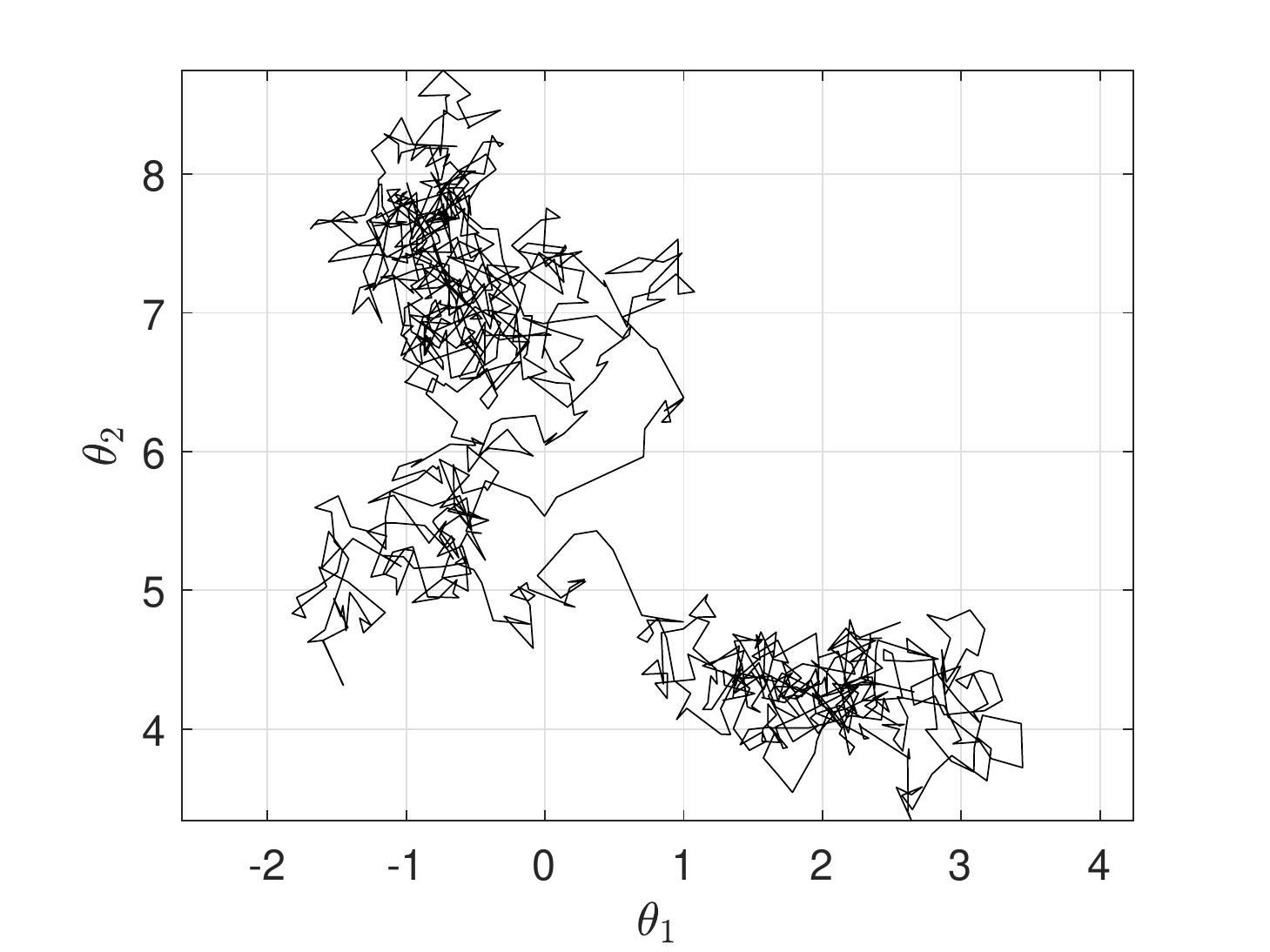}
\caption{Example of the underlying 2-dimensional process.}
\label{fig:Toy1process} 
\end{figure}

The object location is measured in polar coordinates, through the azimuth and radius:
\begin{eqnarray}
\phi_n & = & \mathrm{arc}\tan\left(\frac{\theta^{(1)}_n}{\theta^{(2)}_n}\right)\nonumber\\
r_n & = & \sqrt{\left(\theta^{(1)}_n\right)^2+\left(\theta^{(2)}_n\right)^2}\label{eq:Toy1meas}
\end{eqnarray}
and the system measurements are created by adding Gaussian noise, $\boldsymbol{z}_n=\left[\phi_n+v^{(\phi)}_n,r_n+v^{(r)}_n\right]$, where $v^{(\phi)}_n$ and $v^{(r)}_n$ are Gaussian noise processes with variance $\sigma^2_{\phi}$ and $\sigma^2_{r}$, respectively.
We created trajectories of $1000$ samples with a time-step of $\Delta t=0.01$ and different Signal-to-Noise (SNR) ratios, by varying $\sigma^2_{\phi}$ and $\sigma^2_{r}$.

We applied the diffusion maps algorithm to the measurements, $\boldsymbol{z}_n$, with $\epsilon$ set as the $median$ of the distances.
The covariance matrices used in the modified Mahalanobis distance \eqref{eq:Mahalanobis} were estimated similarly to previous work \cite{TalmonPNAS}, by calculating the empirical covariance of overlapping windows of $30$ time frames, centered at each measurement, i.e. $C_n = \mathrm{cov}(\boldsymbol{z}_{n-N:n+N-1})$ where $N = 15$. Since the covariance matrices are not necessarily full rank, their pseudoinverse was calculated using singular value decomposition (SVD).
We then constructed the Diffusion Maps Kalman (DMK) based on the first two largest eigenvalues and corresponding eigenvectors obtained by diffusion maps. The dimensionality of the diffusion maps coordinates was determined based on the existence of a significant spectral gap after the second coordinate.
The covariance matrices in the Kalman filter update equations, $R_n$ and $Q_n$ in \eqref{eq:Kalman}, were estimated from the data, according to $R_n(k,k)=\mathrm{var}\left(\boldsymbol{z}^{(k)}\right)$ and $Q_n(k,k)=\mathrm{var}\left(\lambda^{(k)}\psi^{(k)}\right)$, where $-\lambda^{(k)}=2/\epsilon\log\mu^{(k)}$ and $\mu^{(k)}$ and $\psi_k$ are the $k$th diffusion maps eigenvalue and eigenvector, respectively.

Using this setting, we evaluated our DMK algorithm, which requires access only to the noisy measurements, $\boldsymbol{z}_n$, and compared it to a parametric algorithm, the particle filter (PF).
For evaluation purposes, the particle filter was provided with the \emph{true system model}, which is considered unknown in our setting.

In addition, we compared our results to three non-parametric algorithms: Gaussian Process filtering (GP) \cite{deisenroth2012robust}, the Kalman filter based algorithm described in \cite{surana2016linear} (KKF) and the observer framework presented in \cite{shnitzer2016}. Note that there is a fundamental difference between our method and two of the competing non-parametric methods, GP and KKF. Both GP and KKF require a subset of data pairs, $\left\{\boldsymbol{\theta}_n,\boldsymbol{z}_n\right\}_{n=1}^{N}$, in their construction.
In our experiments, we used a subset of $N=100$ and $N=21$ samples, covering the entire state space, for algorithms GP and KKF, respectively. 
Moreover, the KKF algorithm requires a choice of a kernel function. Here, we used the same kernel function as in \cite{surana2016linear}.
In contrast, our DMK framework can provide a filtered version of the measurements and a new set of coordinates representing the system characteristics without any information on the underlying state or a specific choice of a kernel function.
However, note that if an estimate of a \emph{specific state representation} is required, some alignment between the DMK coordinates and the underlying state may be needed.

Here, we used the mapping defined in \eqref{eq:liftFinFinal}, $\boldsymbol{\alpha}$, and obtained an estimate of the clean system measurement, $g\left(\boldsymbol{\theta}_n\right)$. 
Therefore, the comparison between the different algorithms was performed in the measurement domain.

\begin{figure}
\centering
\includegraphics[width=0.5\textwidth]{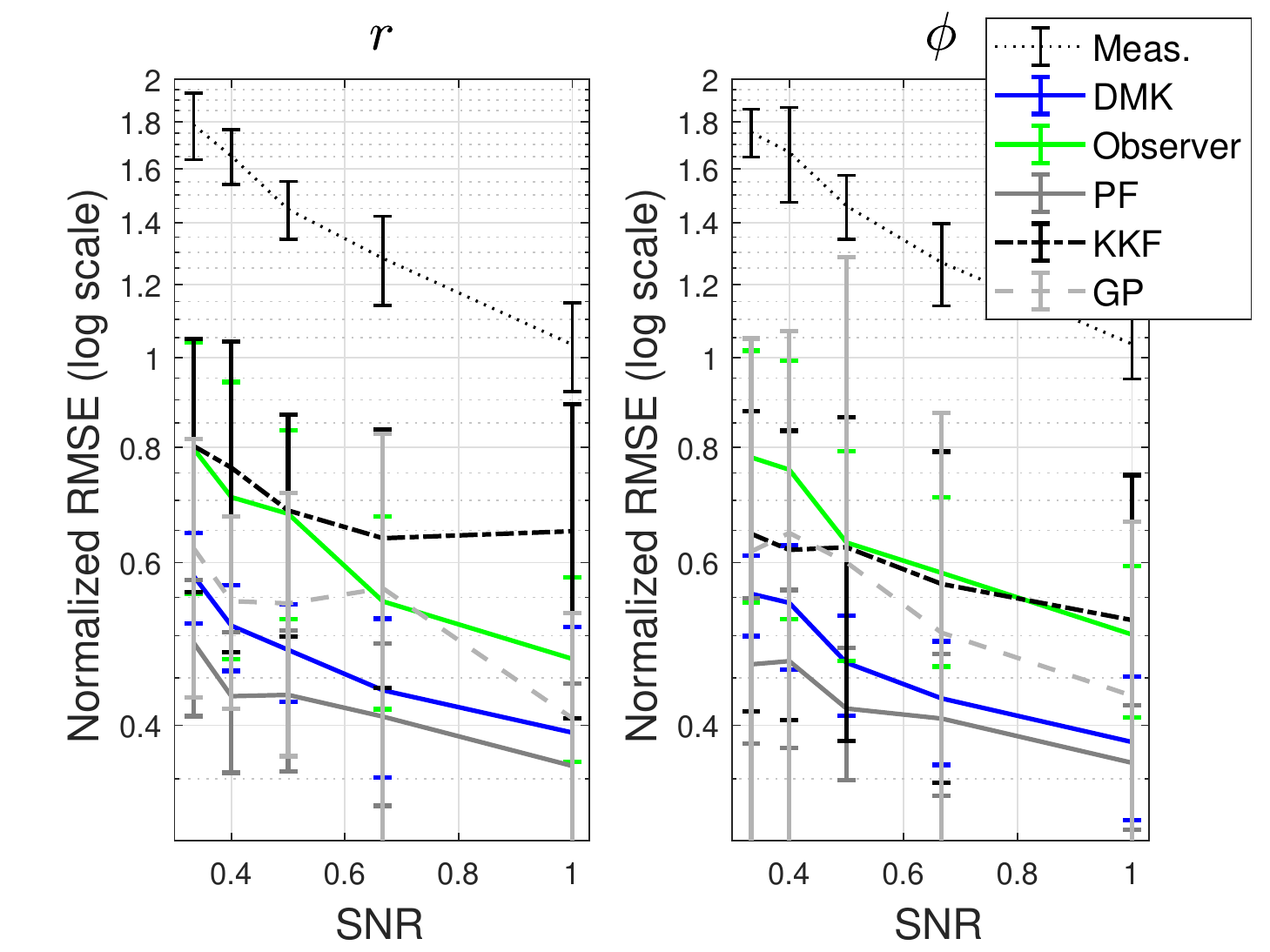}
\caption{Average and standard deviation of the nRMSE of the clean measurement estimates. The nRMSE values were averaged over $50$ realizations of process and noise trajectories.}
\label{fig:Toy1RMSE}
\end{figure}

Figure \ref{fig:Toy1RMSE} presents plots of the normalized root mean square error (nRMSE) values (in log scale) of the clean measurement estimates, $\phi_n$ and $r_n$, obtained by the compared algorithms and the measurement error (denoted by Meas.) as an upper bound. 
The nRMSE values are presented as a function of the measurement noise level (SNR), where the average and standard deviation of the nRMSE were calculated over $50$ different process realizations, for each noise level separately.
We note that the extended Kalman filter (EKF), with the true system model, was considered as well. However, the EKF led to results which were similar to the particle filter and was omitted for brevity therefor.
Figure \ref{fig:Toy1RMSE} depicts that the DMK outperforms all three non-parametric algorithms in all noise levels.
In addition, for high SNR, the DMK errors are close to the errors obtained by the parametric particle filter, \emph{which has access to the true system equations}.
Figure \ref{fig:Toy1trace} further demonstrates this and presents an example for the clean measurement estimation obtained by DMK (in blue) and by the particle filter (in gray), compared with the true clean measurements (in dotted black) and the noisy measurements (gray 'x'), with SNR=1. Plot (a) presents coordinate $\phi_n$ and plot (b) presents coordinate $r_n$. This figure depicts that the DMK estimation closely follows the particle filter estimation.

\begin{figure}
\centering
\subfloat[]{\includegraphics[width=0.45\textwidth]{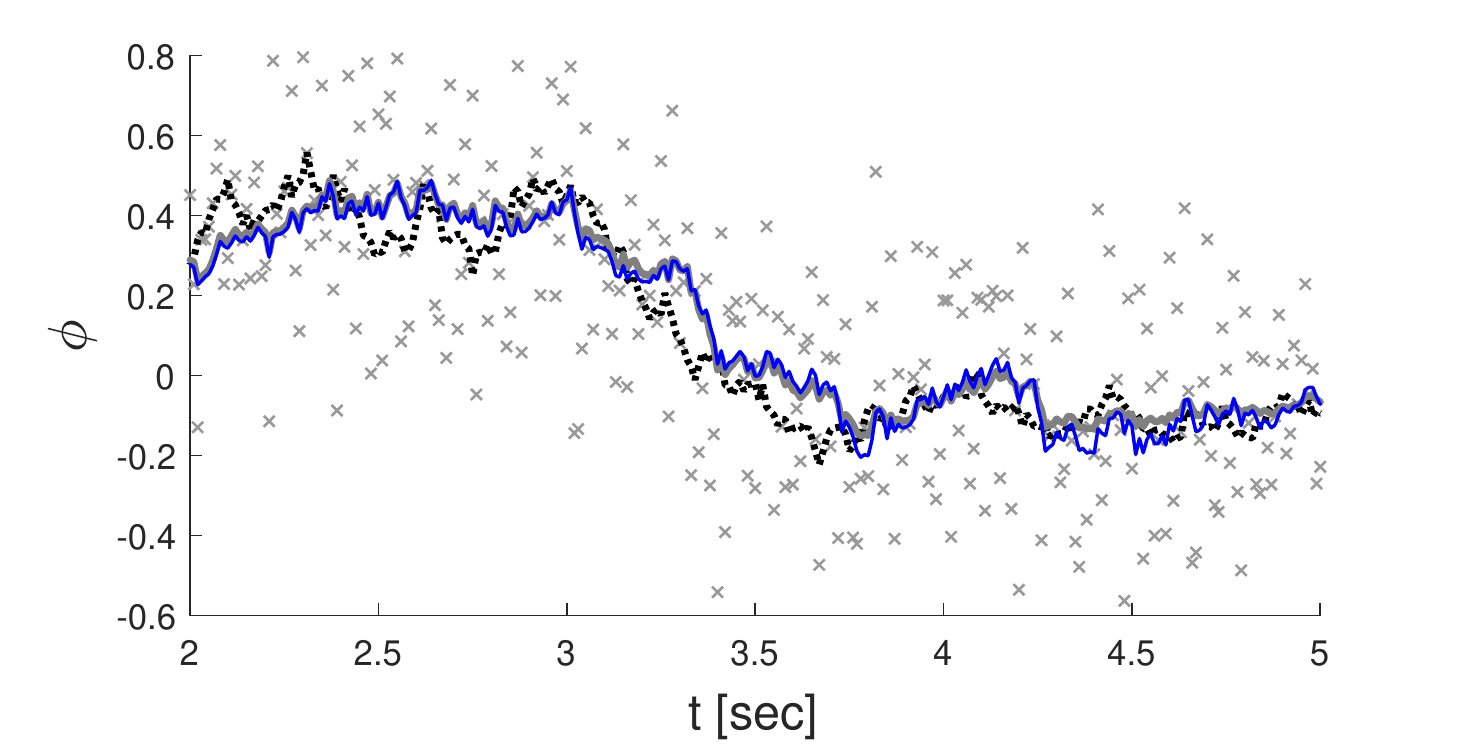}
}

\subfloat[]{\includegraphics[width=0.45\textwidth]{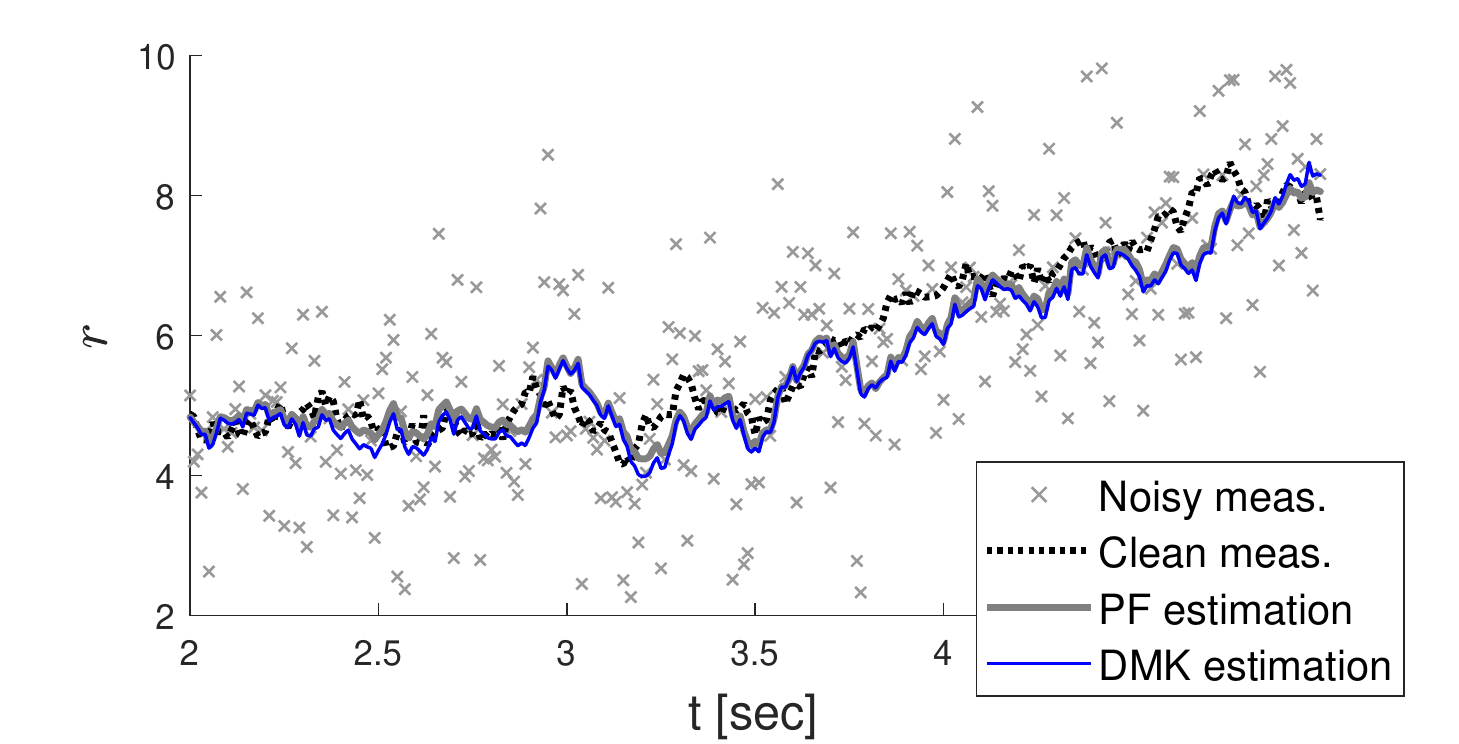}
}
\caption{Example of a trajectory of the filtered measurement coordinates using the DMK algorithm (in blue) compared with the PF estimation (in gray), the true clean measurements (in dotted black) and the noisy measurements (gray 'x').}
\label{fig:Toy1trace}
\end{figure}

Note that for lower SNR values the DMK estimation deteriorates in comparison with the particle filter. This is due to errors in the inferred model, caused by the high noise.
As described in Subsection \ref{sub:StateRecovery} and Subsection \ref{sub:ModelRecovery}, the new system model, which is derived based on diffusion maps, is accurate only for the noiseless case. 
The noise affects the modified Mahalanobis distance and, as a result, the obtained diffusion maps coordinates and eigenvalues contain errors. 
Since the eigenvalues are used in the model dynamics, these errors affect the estimation quality of our framework.
Nevertheless, DMK works quite well in noisy situations, especially in comparison with the non-parametric frameworks.

In order to demonstrate the extent of the estimation deterioration due to noise, we compared the DMK algorithm, which was constructed based on the noisy measurements, and a modified DMK algorithm, where the dynamics (diffusion maps eigenvalues) were obtained by applying diffusion maps to the \emph{clean} measurements and calculating the corresponding ``clean dynamics''.
The resulting nRMSE for different SNR values are presented in Figure \ref{fig:Toy1noise_analysis}, where the average and standard deviation of the nRMSE were calculated based on $50$ process and noise realizations for each SNR value. This figure presents the DMK based on the noisy measurements in blue, the DMK with the ``clean dynamics'' ($\lambda$) in cyan and the particle filter, as a baseline, in gray.
As expected, for high SNR, both the noisy dynamics and the clean dynamics lead to a similar result, whereas for low SNR, the clean dynamics improve the result.
This indicates that our method is indeed affected by the measurement noise. 
However, note that even with these model errors, our method still outperforms the competing non-parametric algorithms, as depicted in Figure \ref{fig:Toy1RMSE}.

\begin{figure}
\includegraphics[width=0.5\textwidth]{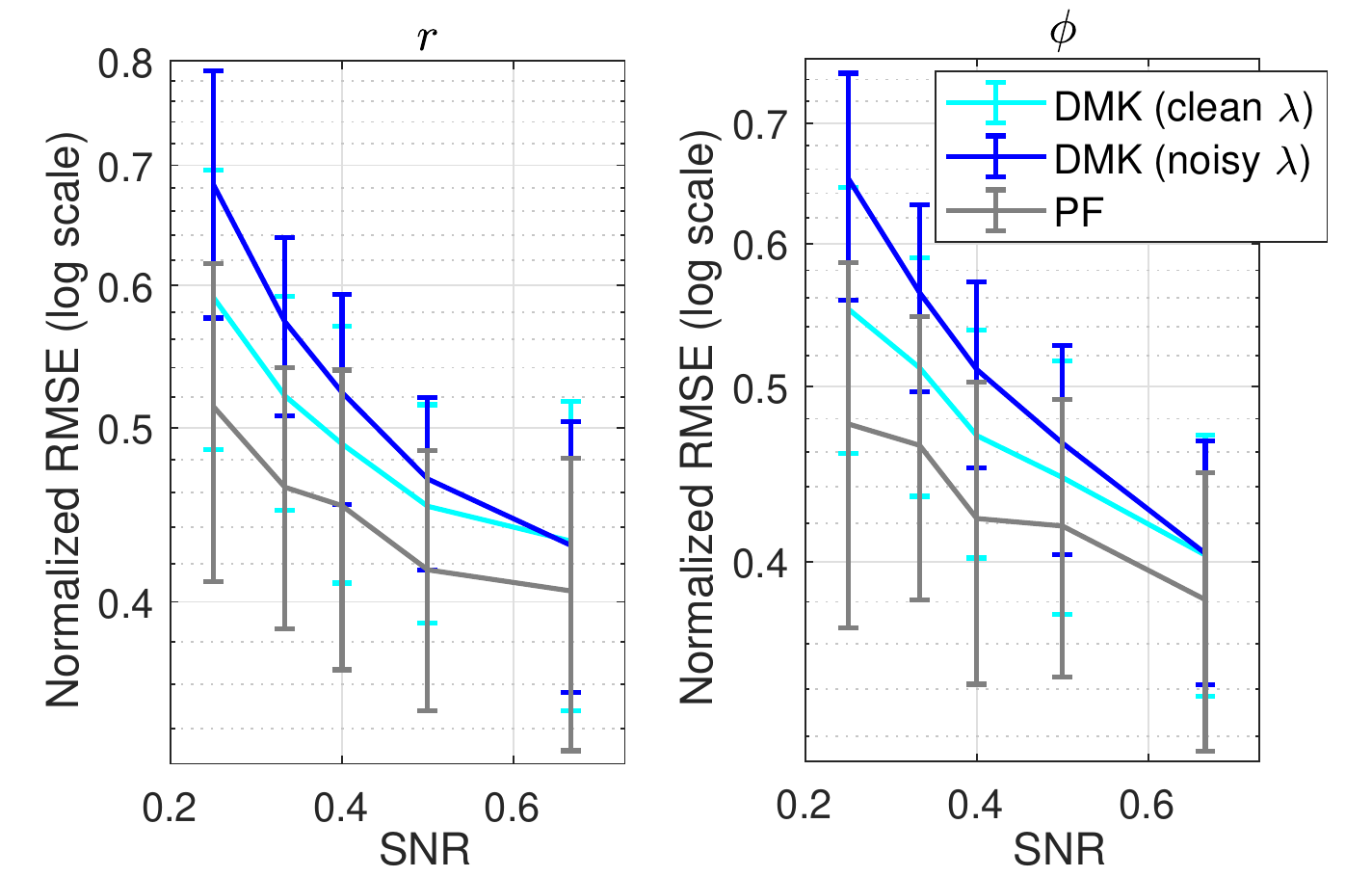}
\caption{Average and standard deviation of the nRMSE of the clean measurement estimates, over $50$ process and noise realizations. A comparison between the DMK algorithm constructed based on the noisy measurements (in blue) the DMK with the ``clean dynamics'' (in cyan) and the PF (in gray).}
\label{fig:Toy1noise_analysis}
\end{figure}

To complete the analysis of this example, we compare the convergence rates of DMK based on the noisy measurements, DMK with the ``clean dynamics'' and the convergence rates of the parametric and non-parametric algorithms.

Figure \ref{fig:Toy1convergence1} and Figure \ref{fig:Toy1convergence2} present the average asymptotic RMSE (aRMSE), i.e. $\sqrt{\frac{1}{M}\sum_{k=1}^M\left\Vert\boldsymbol{\hat{z}}_n^{(k)}-\boldsymbol{\zeta}_n^{(k)}\right\Vert^2_2}$, where $M$ is the number of process and noise realizations, $M=50$ in both figures, $\boldsymbol{\hat{z}}_n^{(k)}$ denotes an estimation of the measurements (filtered) and $\boldsymbol{\zeta}_n^{(k)}$ denotes the true clean measurements at time $n$ in realization $k$.
Figure \ref{fig:Toy1convergence1} presents a comparison between the DMK algorithm based on the noisy measurements (in blue), the DMK algorithm with the ``clean dynamics'' (in cyan) and the particle filter (in gray), for two SNR values. Plot (a) presents the aRMSE values for measurement coordinate $\phi_n$ and plot (b) presents the aRMSE values for measurement coordinate $r_n$. 
This figure depicts that the convergence rate of DMK is not affected by the noise. In addition, the convergence rate of DMK is either comparable (in plot (a)) or faster (in plot (b)) compared with the convergence rate of the particle filter, for both SNR values.

\begin{figure}
\centering
\subfloat[]{\includegraphics[width=0.5\textwidth]{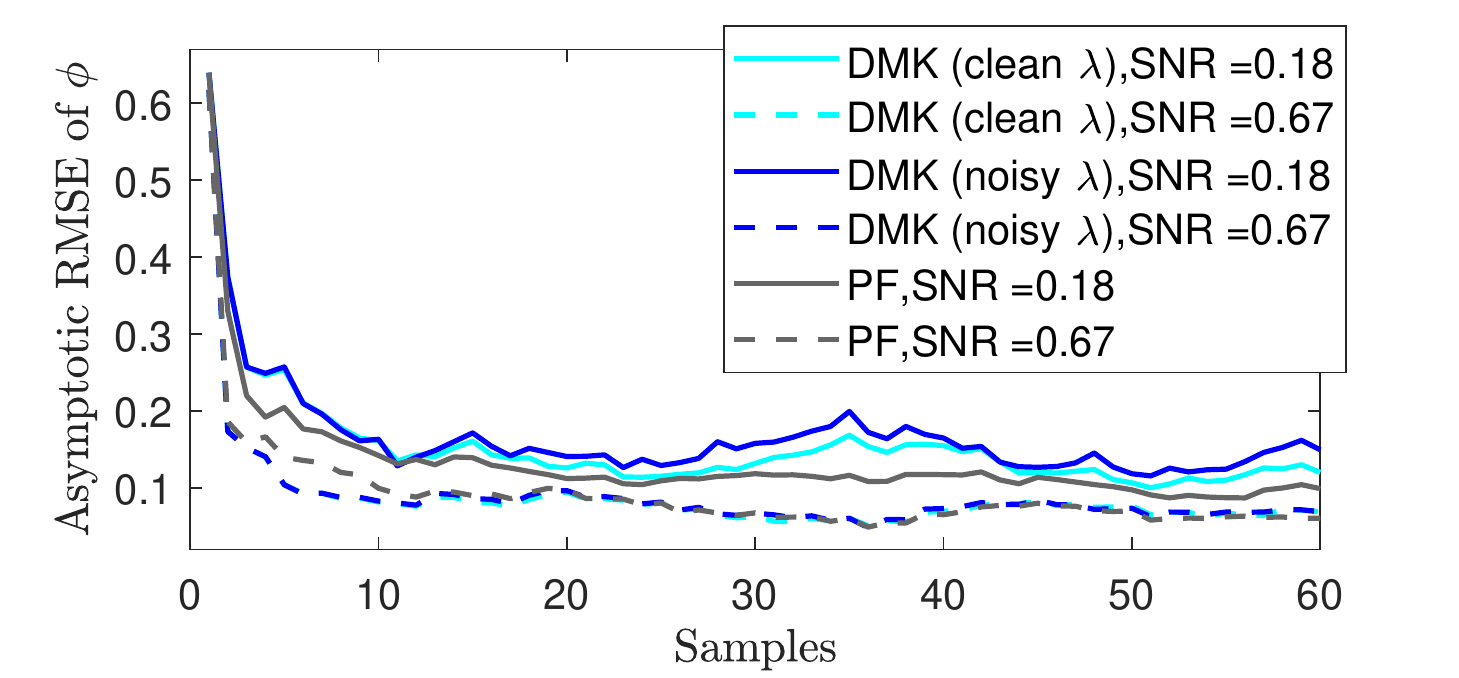}
}

\subfloat[]{\includegraphics[width=0.5\textwidth]{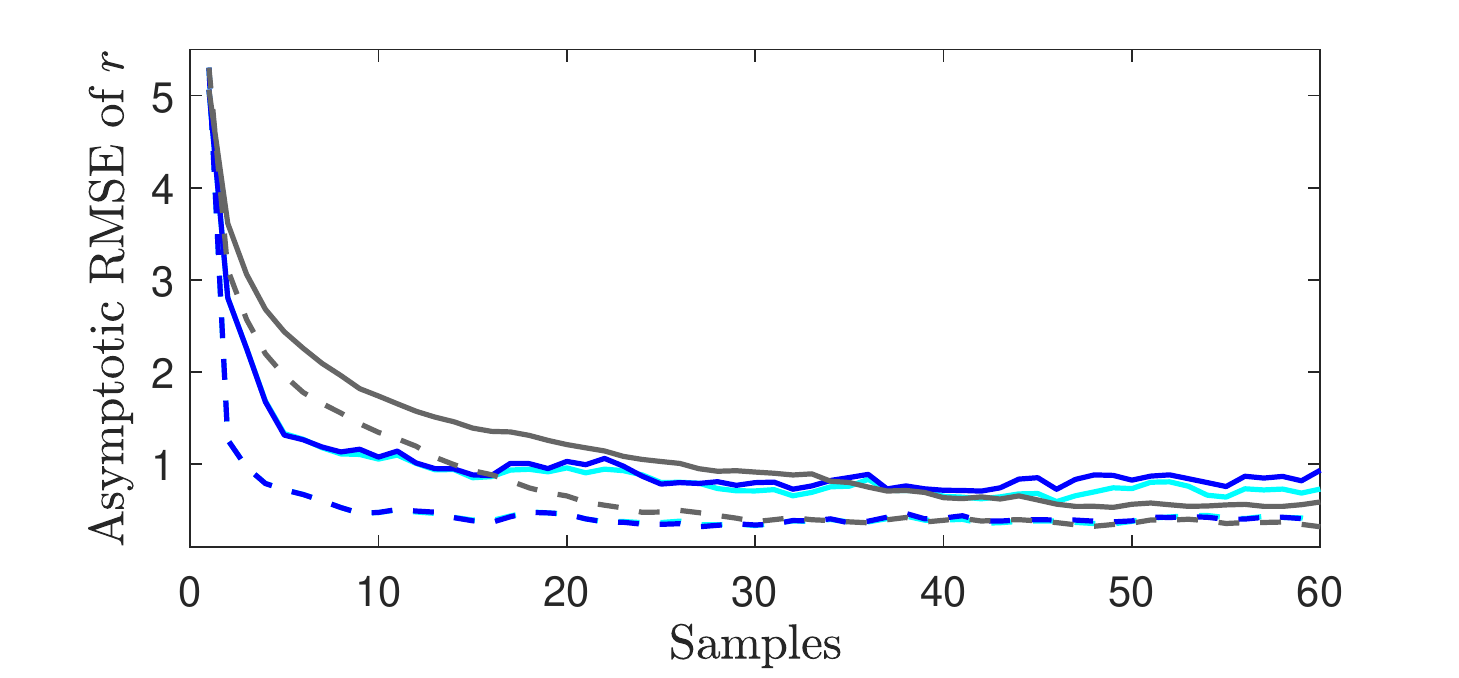}
}
\caption{Asymptotic RMSE, averaged over $50$ realizations, of the measurements estimation using the DMK algorithm constructed based on the noisy measurements, colored in blue, the DMK algorithm with the ``clean dynamics'', colored in cyan, and the particle filter, colored in gray.}
\label{fig:Toy1convergence1}
\end{figure}

Figure \ref{fig:Toy1convergence2} presents a comparison of the asymptotic RMSE values of the cleaned measurements obtained by the DMK algorithm (in blue), by the two competing non-parametric algorithms, GP and KKF (in light gray and in black, respectively) and by the particle filter (in dark gray), for two different SNR values. Plots (a) and (c) were created using an SNR of $0.67$ and plots (b) and (d) were created using an SNR of $0.18$. This figure depicts that the convergence rates of all the non-parametric algorithms are similar in both measurement coordinates and for both SNR values.

\begin{figure*}
\centering
\subfloat[]{\includegraphics[width=0.45\textwidth]{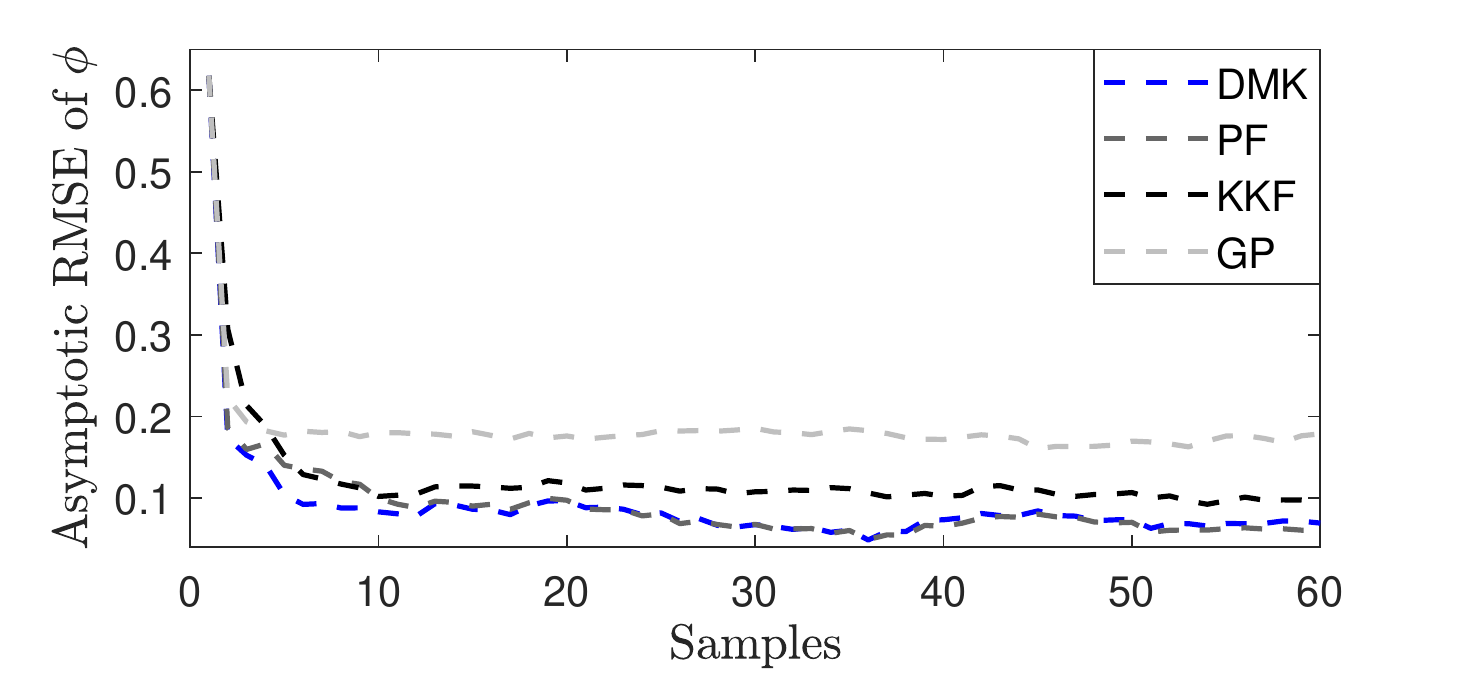}
}\subfloat[]{\includegraphics[width=0.45\textwidth]{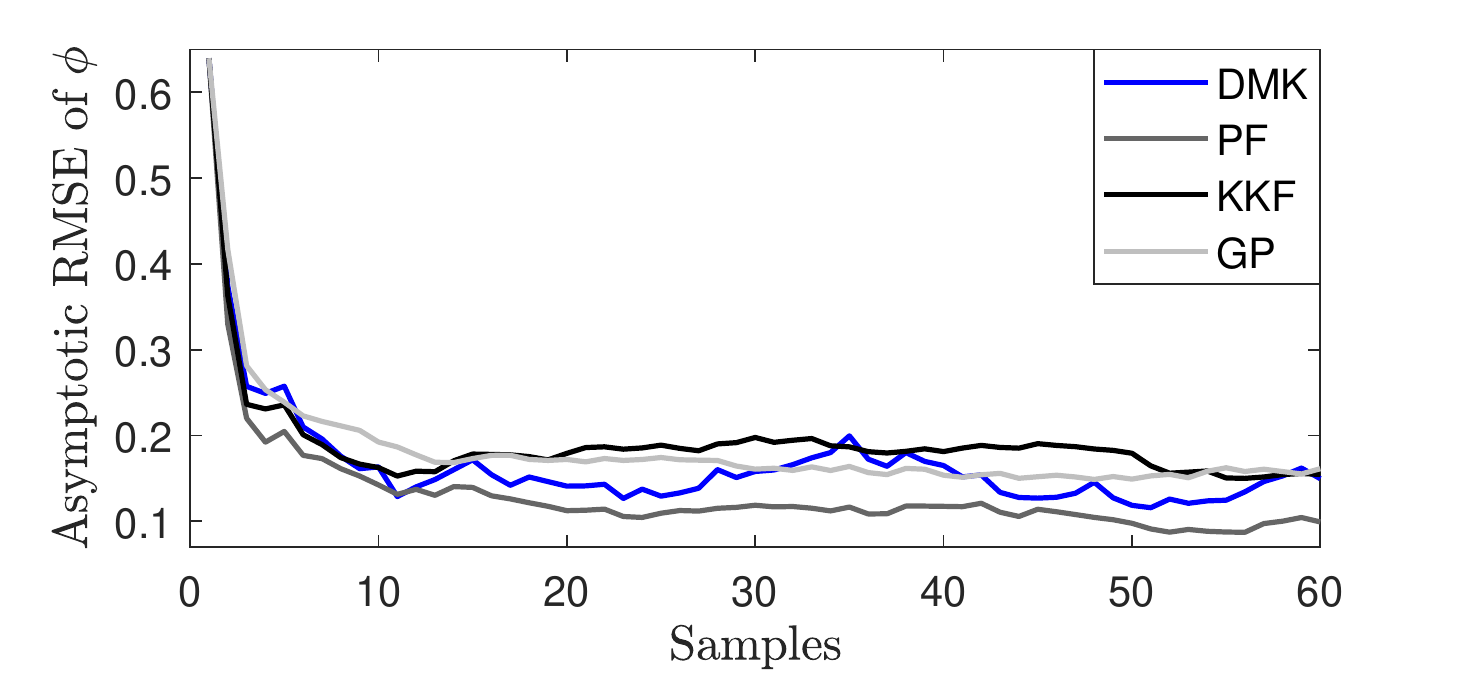}
}

\subfloat[]{\includegraphics[width=0.45\textwidth]{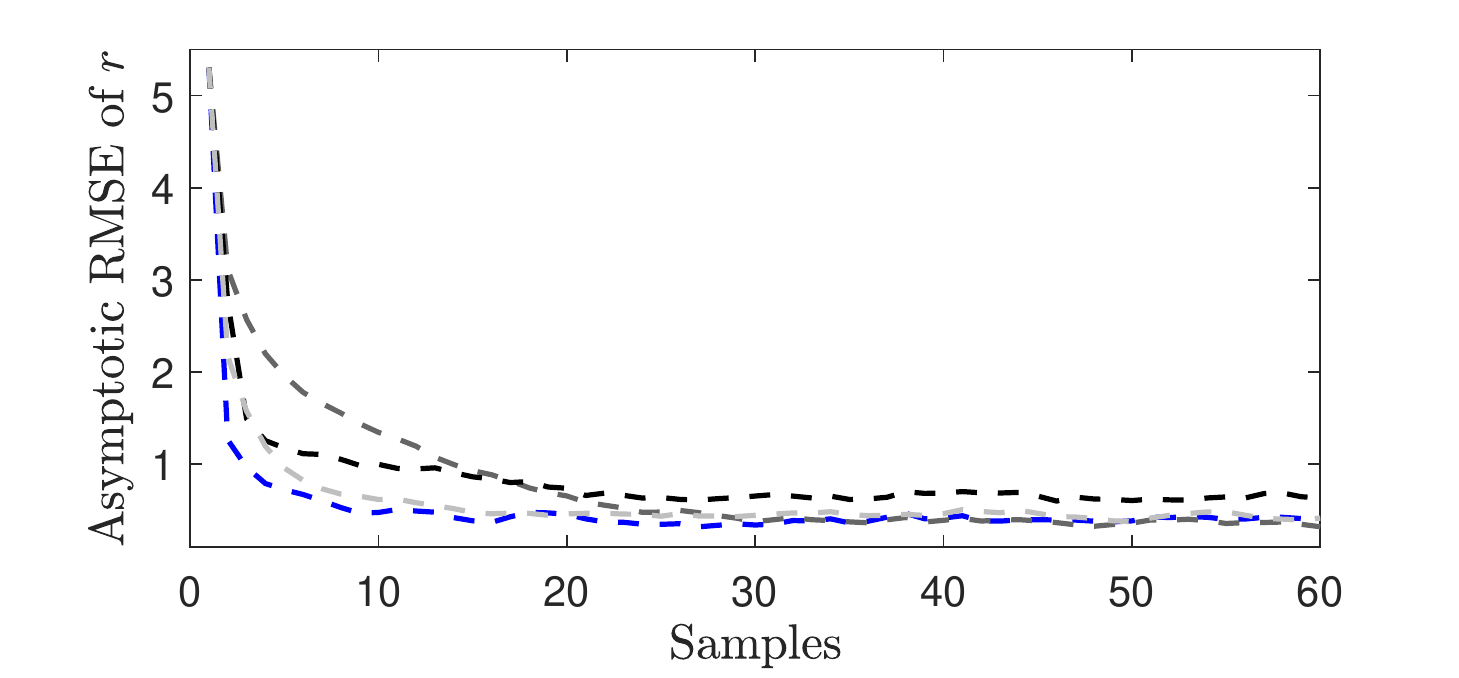}
}\subfloat[]{\includegraphics[width=0.45\textwidth]{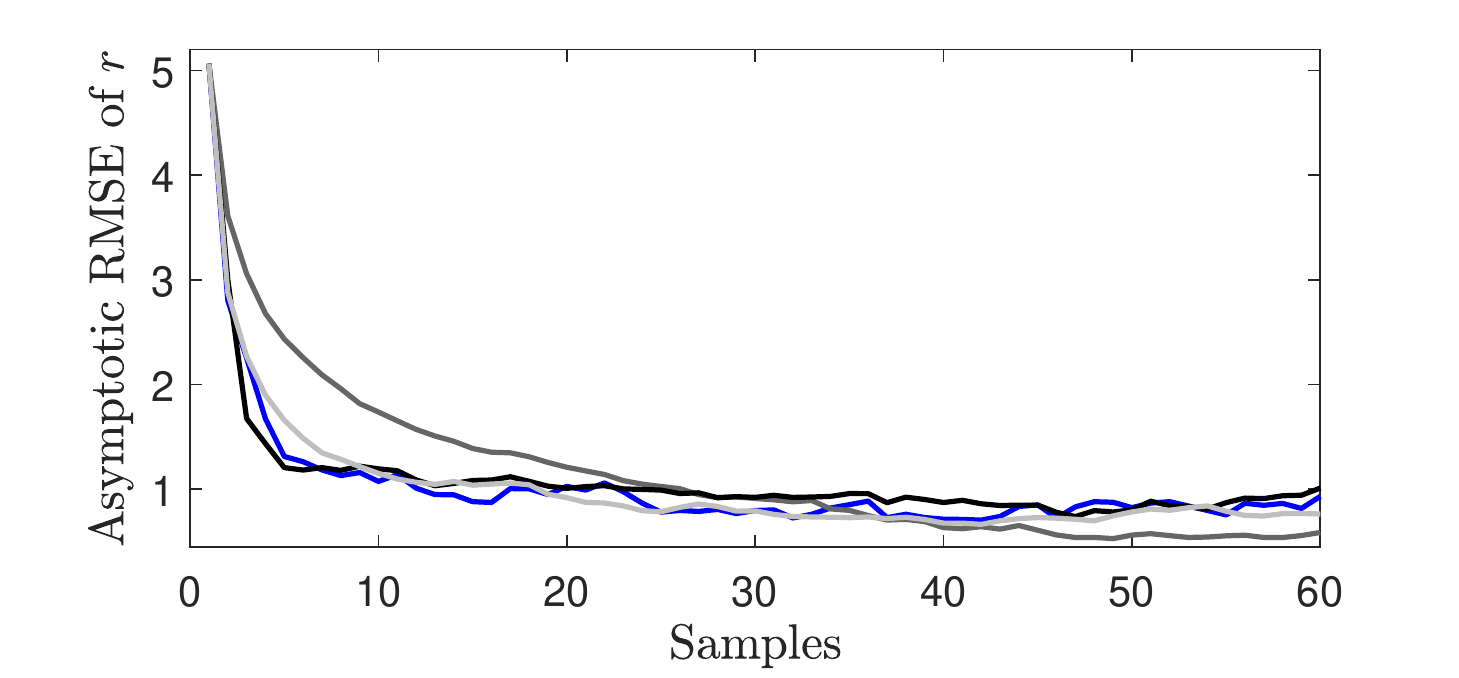}
}
\caption{Asymptotic RMSE, averaged over $50$ realizations, of the measurements estimation using the DMK (in blue), the PF (in dark gray), KKF, (in black) and GP (in light gray). The SNR was set to 0.67 in (a) and (c) and to 0.18 in (b) and (d). Plots (a) and (b) present the asymptotic RMSE values for measurement coordinate $\psi_n$ and plots (c) and (d) present the asymptotic RMSE values for measurement coordinate $r_n$.}
\label{fig:Toy1convergence2}
\end{figure*}

\subsection{Non-Gaussian Nonlinear Object Tracking\label{sub:Toy2}}
We now present an example which is based on the setting given in \cite{TalmonPNAS,shnitzer2016}. In this setting, the location of a radiating object moving on a 3D sphere is estimated based on measurements from three sensors, $\boldsymbol{s}^{(1)},\boldsymbol{s}^{(2)},\boldsymbol{s}^{(3)}$, which are modeled as ``Geiger Counters''. The movement of the object is defined by two underlying Langevin processes, describing the elevation and azimuth angles as follows
\begin{eqnarray}
\Delta\theta^{(1)}_{n+1} & = & \left(\frac{\pi}{2}\cdot c-c\cdot\theta^{(1)}_n\right)+u^{(1)}_n\\
\Delta\theta^{(2)}_{n+1} & = & \left(\frac{\pi}{5}\cdot c-c\cdot\theta^{(2)}_n\right)+u^{(2)}_n
\end{eqnarray}
where $b$ is the diffusion coefficient and $c$ is the drift rate parameter. 
In our experiments, $b$ was set to $0.01$ and $c$ was varied between $0.1$ and $1$ to simulate different trajectory properties.

The 3D location of the object at the $n$th time step is computed by:
\begin{eqnarray}
x^{(1)}_{n} & = & \cos\left(\theta^{(2)}_{n}\right)\sin\left(\theta^{(1)}_{n}\right)\nonumber\\
x^{(2)}_{n} & = & \sin\left(\theta^{(2)}_{n}\right)\sin\left(\theta^{(1)}_{n}\right)\label{eq:Toy2data}\\
x^{(3)}_{n} & = & \cos\left(\theta^{(1)}_{n}\right)\nonumber
\end{eqnarray}
We mark the 3D position of the object by $\boldsymbol{x}_{n}=\left[x^{(1)}_{n},x^{(2)}_{n},x^{(3)}_{n}\right]$.

The system measurements are given by three Poisson processes, with a rate parameter that is based on the 3D location of the object.
\begin{equation}
y^{(j)}_n\sim Pois\left(r^{(j)}_n\right)\,\,\,j=1,2,3
\end{equation}
where $r^{(j)}_n=\exp\left(-\left\Vert \boldsymbol{s}^{(j)}-\boldsymbol{x}_n\right\Vert\right)$. 

Finally, a Poisson noise process with a fixed rate parameter, $v^{(j)}_n\sim Pois\left(\lambda_{v}\right)$, is added to each sensor
\begin{equation}
z^{(j)}_{n} = y^{(j)}_n+v^{(j)}_n\label{eq:Toy2meas}
\end{equation}
where $z^{(j)}_{n}$ are the accessible system measurements.

Note that the presented setting is non-Gaussian and therefore the Kalman filter assumptions are not held. However, we show that DMK still provides good results, due to the incoporation of time dependencies, and is either better or comparable to the observer framework presented in \cite{shnitzer2016}.

In order to obtain an estimated representation of the system state (the azimuth and elevation angles) from the noisy measurements, $\left\{\boldsymbol{z}_n\right\}_{n=1}^{N}$, we apply DMK.

We simulated $300,000$ time samples of the two underlying angles $\theta^{(1)}_n,\theta^{(2)}_n$ with $\Delta t=0.5$ and constructed the measurements according to \eqref{eq:Toy2data} and \eqref{eq:Toy2meas}.
After obtaining the system measurements $\boldsymbol{z}_n$, we first performed a pre-processing stage, similarly to \cite{TalmonPNAS,shnitzer2016}. This includes constructing histograms for overlapping frames of $60$ time-samples of $\boldsymbol{z}_n$ and then calculating the modified Mahalanobis distance \eqref{eq:Mahalanobis} between pairs of histograms. 
The covariance matrices for the modified Mahalanobis distance between the measurements were calculated similarly to Subsection \ref{sub:Toy1}, based on the empirical covariance of overlapping windows of $20$ time frames, centered at each measurement.
At this point we are left with $300,000/60=5000$ system measurements.

The diffusion maps algorithm, described in Subsection \ref{sub:StateRecovery}, is applied to the measurements using the calculated Mahalanobis distance, with an empirical choice of $\epsilon$ as the median of the Euclidean distances. We obtain a set of eigenvectors and eigenvalues representing intrinsic properties of the system. However, these eigenvectors do not necessarily correspond to the true system state, $\boldsymbol{\theta}_n$, and can represent some linear combination of the state coordinates \cite{dsilva2015parsimonious}. Therefore, for evaluation purposes, we perform a linear regression on 3 eigenvectors, corresponding to the largest eigenvalues, and the true system state, based on $100$ samples.
Note that in this specific example, the linear regression provides a good representation of the underlying state using only a few diffusion maps coordinates. This is due to the choice of the underlying state equation and the lift function which places the data on a sphere. This system was specifically chosen in this application, since the measurements (spike trains) cannot be easily interpretable, in contrast to the example in Subsection \ref{sub:Toy1}. 
In Subsection \ref{sub:real_application}, a more complicated system is presented, where real neuronal spiking activity is analyzed. For such data, the underlying state equation and system model are completely unknown, and indeed, there, a larger number of coordinates is required to obtain a good representation of the true underlying state (the animal position).

Based on the resulting eigenvectors and eigenvalues we construct the Kalman filter, described in Subsection \ref{sub:Kalman}. The covariance matrices of the measurement noise, $R_n$, and the state noise, $Q_n$, were estimated from the data (according to the variance of the histogram measurements and the covariance of the obtained eigenvectors and eigenvalues) as described in Subsection \ref{sub:Toy1}. 

We compare our results to the observer framework, described in \cite{shnitzer2016}, with a choice of $\gamma=0.01$ (which led to the best results in this case).
In Figure \ref{fig:PNAS_ScatterColored}, a comparison between the DMK, the observer framework and the diffusion maps coordinates (without additional analysis) is presented. This figure contains $6$ identical scatter plots, each presenting the true underlying angles, $\theta^{(1)}_n,\theta^{(2)}_n$. Each plot is colored according to a different coordinate, plots (a) and (d) are colored according to the first and second estimated coordinates of the observer framework, plots (b) and (e) are colored according to the DMK estimation and plots (c) and (f) are colored according to the diffusion maps coordinates (DM). The color gradients in Figure \ref{fig:PNAS_ScatterColored} depict that the DMK significantly improves the estimation of the two underlying angles, compared with the diffusion maps coordinates and the observer framework. Moreover, the coordinates obtained by the observer framework suffer from inaccuracies at the boundaries of the data. This is visible for example, in plot (a), when $\theta^{(1)}_n<0.5$ and is due to the inaccuracy of the linear lift function at the boundaries. These inaccuracies are not apparent in the DMK coordinates which recover the true underlying angles more accurately even at the boundaries of the data.

\begin{figure*}
\centering

\subfloat[]{\protect\includegraphics[width=0.22\textwidth]{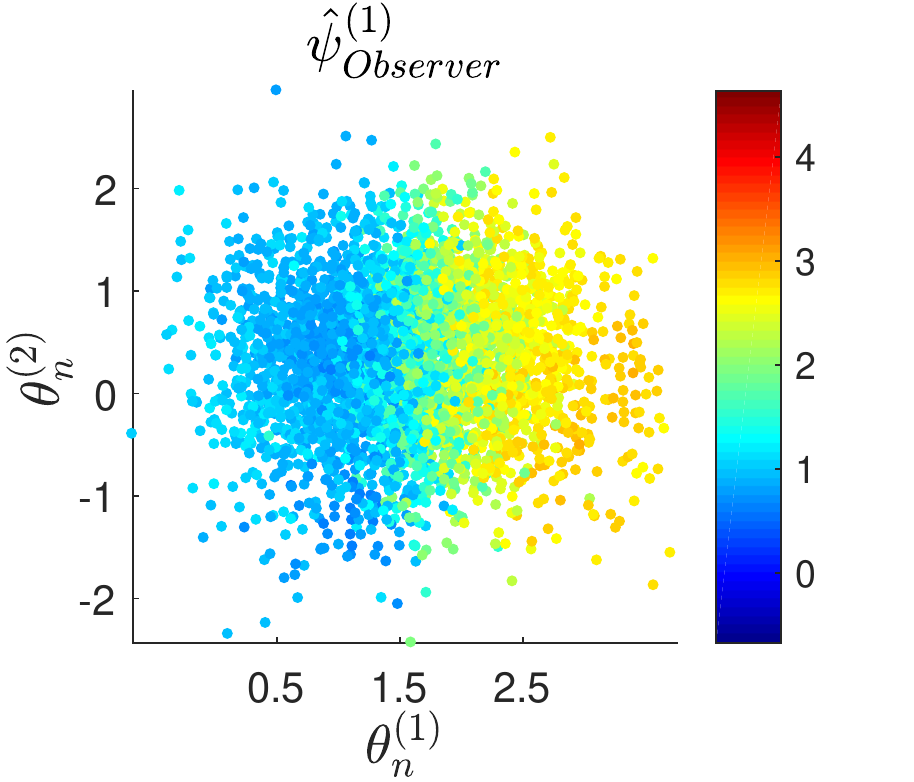}

}\subfloat[]{\protect\includegraphics[width=0.22\textwidth]{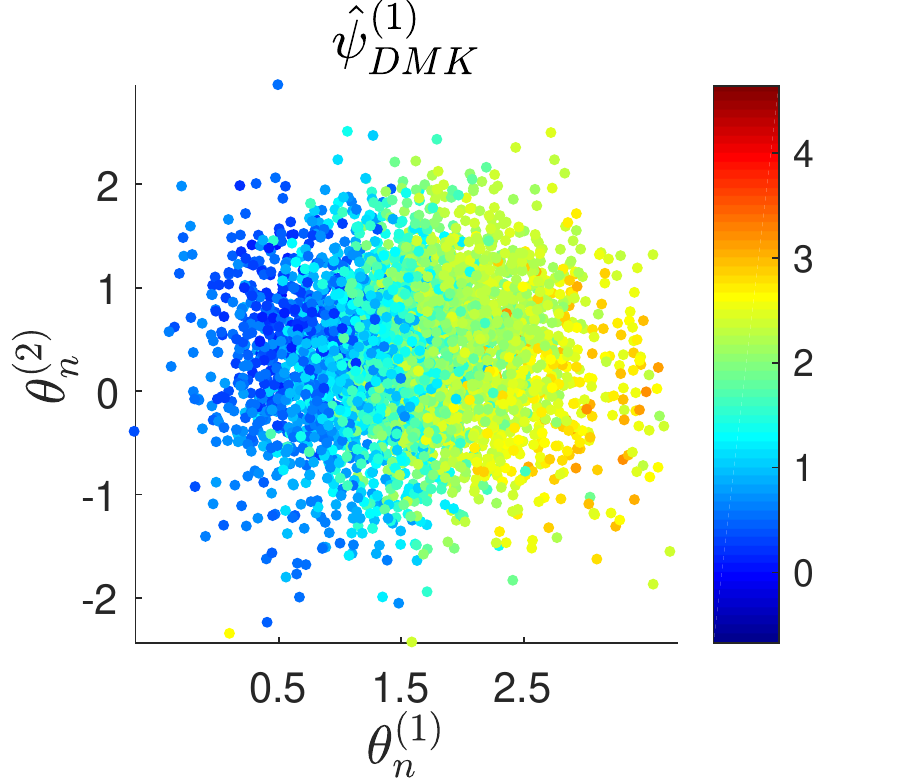}

}\subfloat[]{\protect\includegraphics[width=0.22\textwidth]{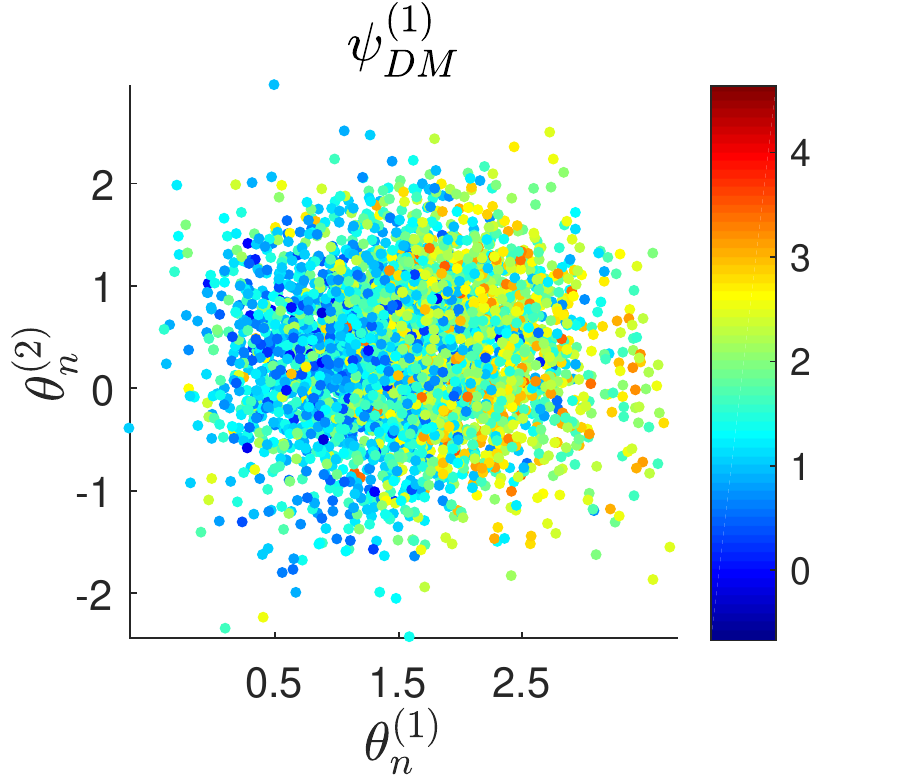} }

\subfloat[]{\protect\includegraphics[width=0.22\textwidth]{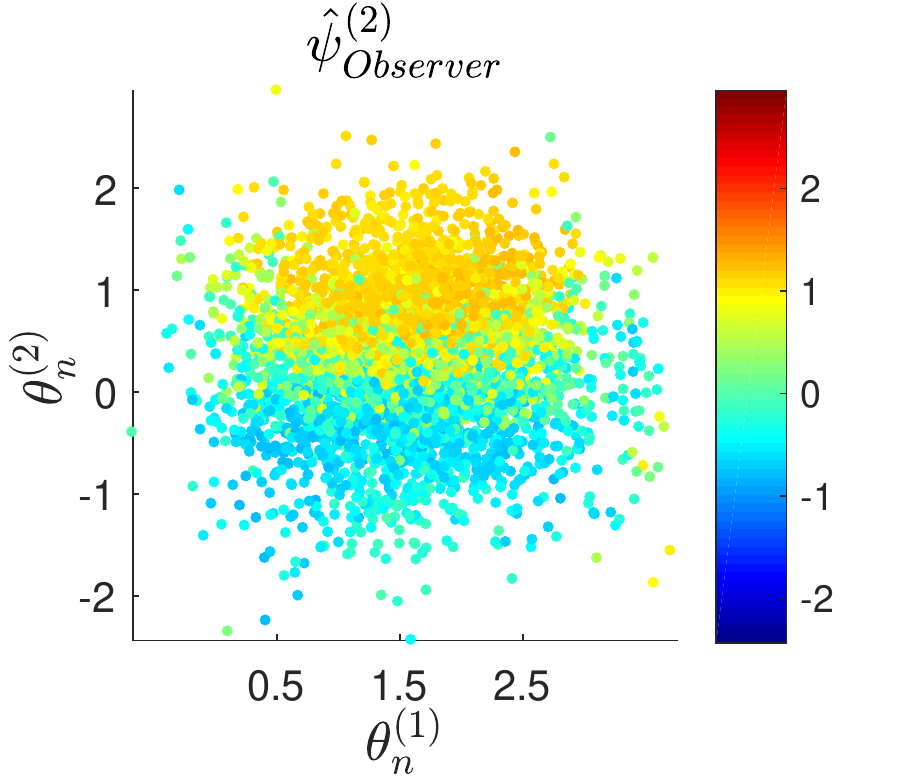}

}\subfloat[]{\protect\includegraphics[width=0.22\textwidth]{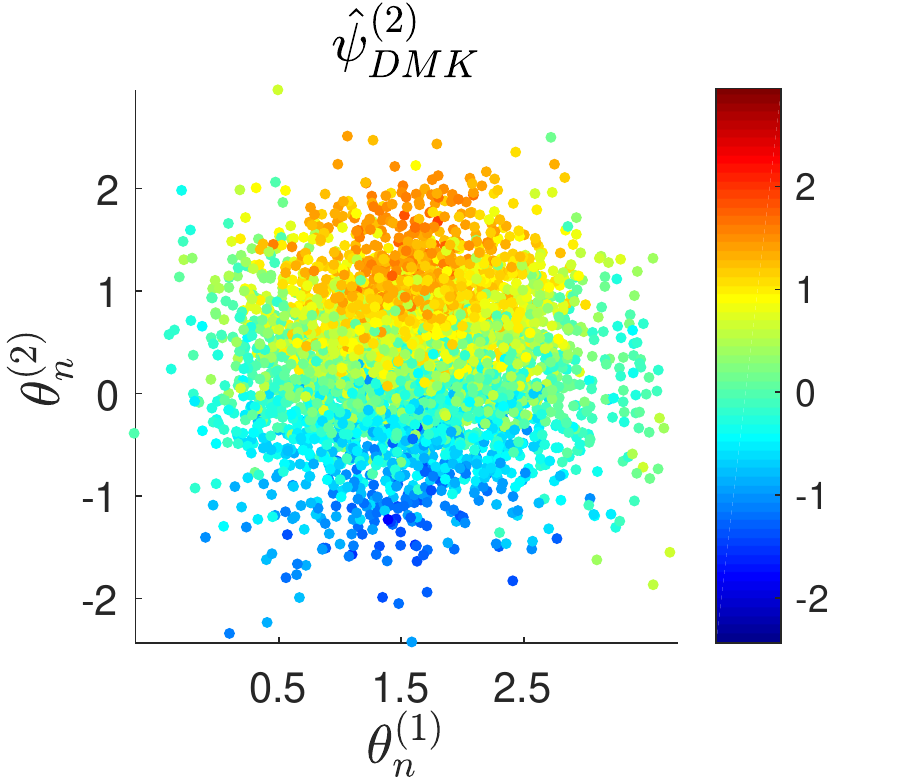}

}\subfloat[]{\protect\includegraphics[width=0.22\textwidth]{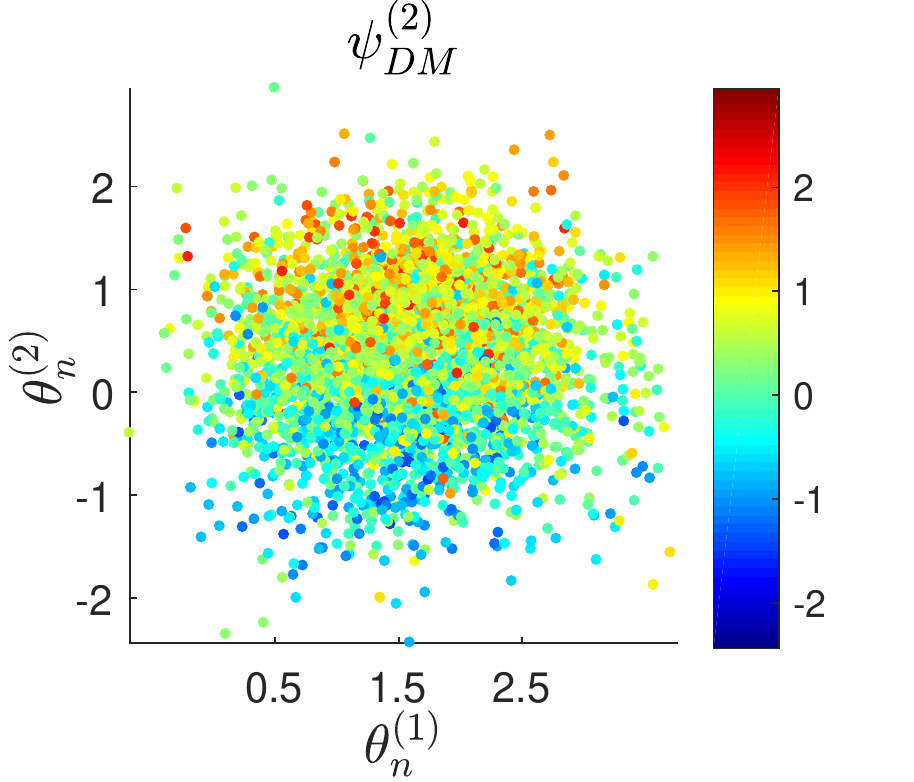} }

\caption{Scatter plots of the azimuth and elevation angles, colored by the coordinate estimates. The plots are colored according to the the first and second state estimates of the observer framwork (in (a) and (d)), the first and second state estimates of our suggested DMK filter (in (b) and (e)) and the first and second coordinates obtained by diffusion maps (in (c) and (f)).}
\label{fig:PNAS_ScatterColored} 
\end{figure*}

Figure \ref{fig:Toy2_correlation} presents the correlation between the DMK coordinates (after linear regression based on $100$ samples) and the true underlying states, $\theta^{(1)}$ and $\theta^{(2)}$, colored in blue. The average and standard deviation over $50$ realizations are presented. For comparison, we present the correlations of the observer coordinates, in green, and the diffusion maps coordinates, in black.
These plots depict that even though the Kalman model is inaccurate in this example, the DMK obtains result which are either better or comparable to the observer, in different drift-diffusion rate ratios.

\begin{figure}
\centering

\subfloat[]{\protect\includegraphics[width=0.25\textwidth]{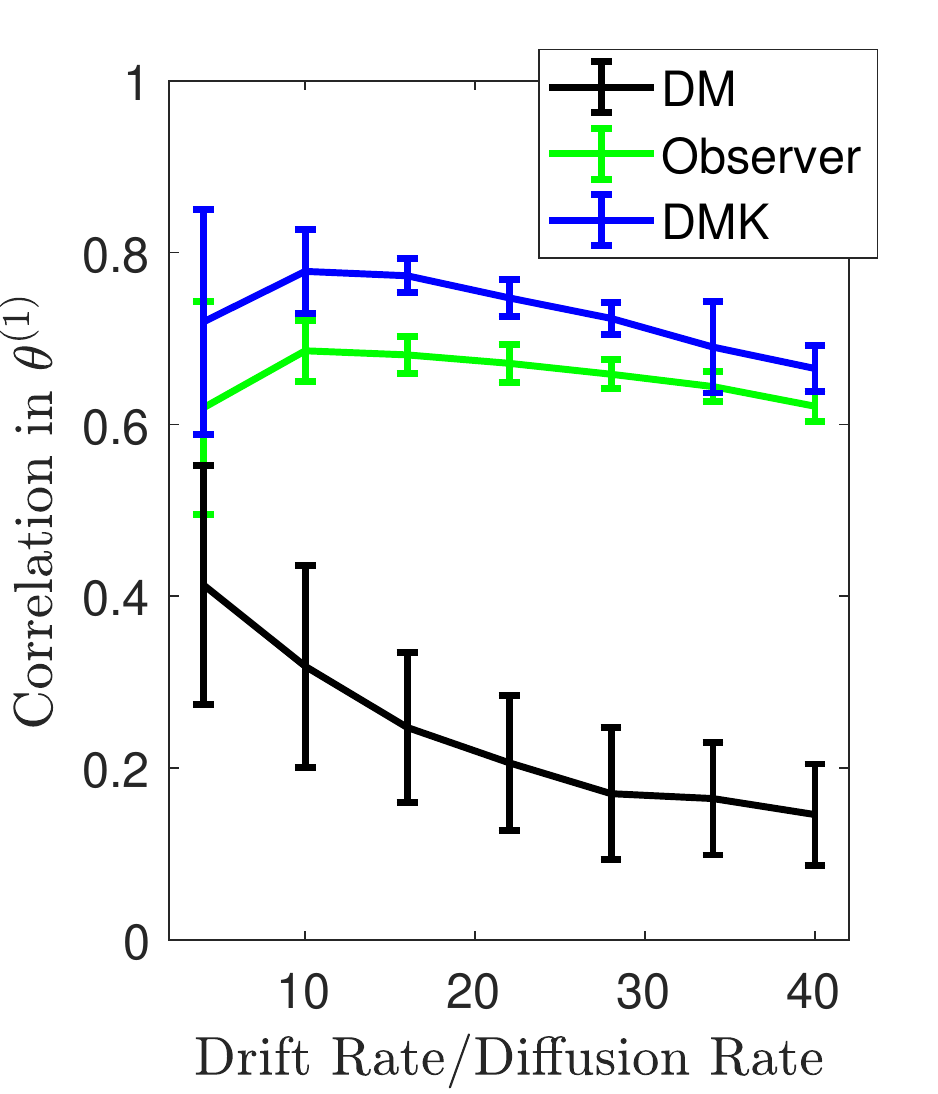}
}\subfloat[]{\protect\includegraphics[width=0.25\textwidth]{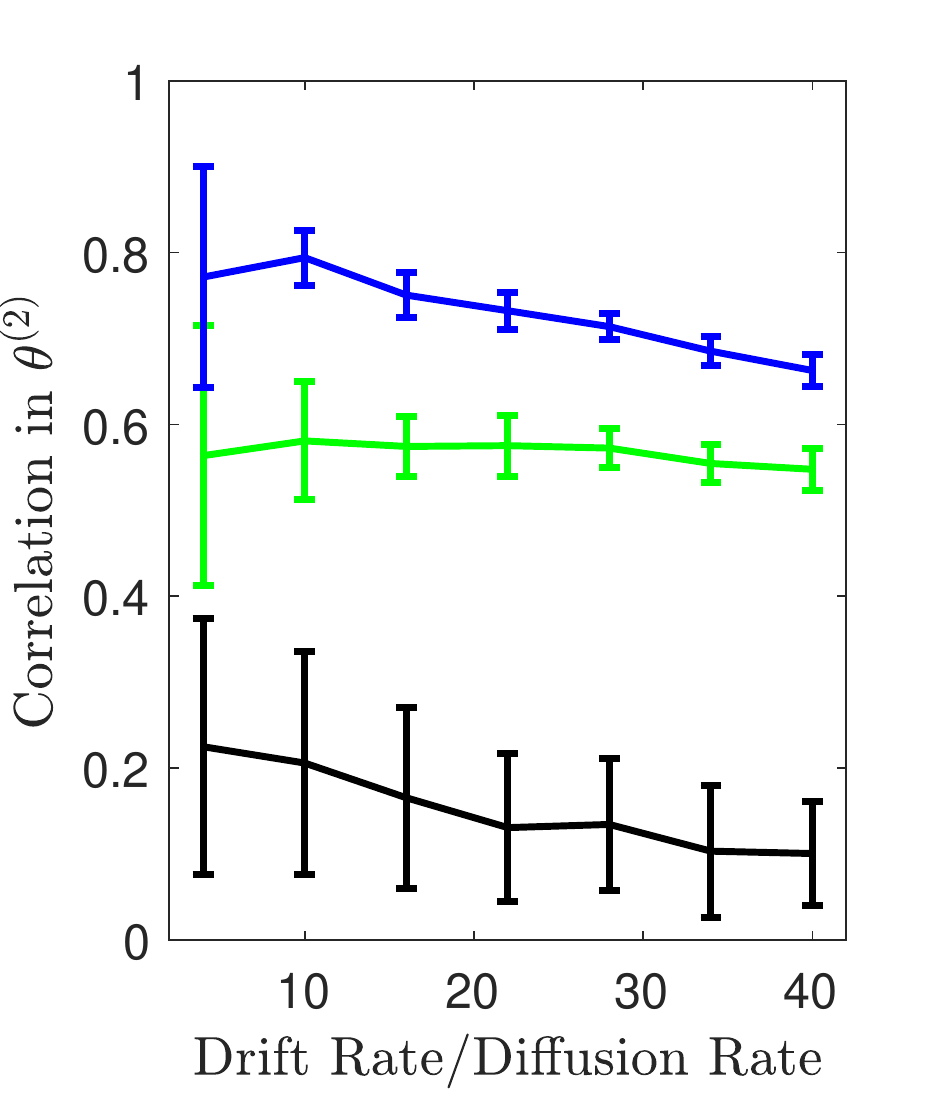}
}
\caption{Correlation between the DMK coordinates and the true underlying system state coordinates, in blue, compared with the observer coordinates, in green, and the diffusion maps coordinates, in black.}
\label{fig:Toy2_correlation} 
\end{figure}

\subsection{Location Tracking based on Rat Hippocampal Neuronal Activity\label{sub:real_application}}
To demonstrate our framework on real data, we consider a publicly available data set \cite{karlsson2015simultaneous}. 
This data set contains simultaneous recordings of hippocampal regions $CA1$ and $CA3$ or regions $CA1$ and $MEC$ in $9$ rats during a spatial alternation task in a W-shaped maze.
We show that when applying DMK to the neuronal spiking activity, it reveals a meaningful representation which is related to the 2D position of the rat. 

A related work \cite{wu2017gaussian} addressed this task using a \textit{parametric} framework that recovers a latent state governing the spike rates. Their framework is based on a Poisson process that generates the spiking data and two Gaussian processes, which model the latent state dynamics and the mapping between the latent state and the firing rate.
This parametric method assumes a specific model and is especially suited for neuronal spike train data, whereas our method is non-parametric and can be applied to a larger class of problems.

From the available data set \cite{karlsson2015simultaneous}, we focus on experiments in which the animals were actively moving in the maze. 
In addition, since some recordings include only a small number of active neurons, we restrict our analysis to experiments which contain more than $20$ active neurons and animals which have more than $3$ such experiments (for evaluation of error statistics).
The remaining data include six animals (abbreviated Bon, Ten, Cor, Eig, Fra and Mil) with $10$ to $24$ experiments each. The length of each experiment ranged between $7$ to $15$ minutes.

We apply the proposed method to the recorded neuronal spiking data and construct a new data-driven representation. 
We show that this new representation is highly related to the true position of the animal. 

For this purpose, we first bin the data to obtain a histogram for each neuron depicting the spike rate over time.
The histogram bin sizes were chosen empirically, such that at least one neuron is active at any time frame (bin). 
The bin sizes ranged between $1-2.6$ seconds (chosen differently in different animals). The bin sizes, number of experiments per animal and number of active neurons used in our analysis are summarized in Table \ref{tab:RatData}.

\begin{table*}
\centering
\begin{tabular}{|| c | c | c | c | c | c | c ||}
\hline
& Bon & Ten & Cor & Eig & Fra & Mil\\[0.5ex]
\hline\hline
Histogram bin size (sec) & $1$ & $1$ & $2.6$ & $1.6$ & $1$ & $1$\\[0.5ex]
\hline
Number of experiments & $24$ & $14$ & $10$ & $12$ & $18$ & $12$\\[0.5ex]
\hline
Number of neurons per exp. & $33-67$ & $42-60$ & $20-26$ & $29-59$ & $40-69$ & $25-38$\\[0.5ex]
\hline
\end{tabular}
\caption{Hipocampal neuronal activity - analysis parameters\label{tab:RatData}}
\end{table*}

The diffusion maps algorithm was applied to the obtained histograms, which are treated as measurements in the construction of the affinity matrix described in \eqref{eq:discreteKern}.
The covariance matrices for the modified Mahalanobis distance were estimated based on overlapping windows of $15$ time frames (histogram bins), and the kernel scale, $\epsilon$, was empirically set to be $3$ times the median of the distances, to avoid outliers in the diffusion maps coordinates.
The affinity matrix was normalized according to \eqref{eq:discreteNormKern} and its first $20$ (non-trivial) eigenvectors, corresponding to the largest eigenvalues, were computed.
The DMK was then applied to these eigenvectors, resulting in a new representation of the data, $\boldsymbol{\hat{\Psi}_n}=[\psi_n^{(1)},\dots,\psi_n^{(20)}]$.
The DMK covariance matrices were estimated based on the data and the calculated eigenvectors, similarly to Subsection \ref{sub:Toy1} and Subsection \ref{sub:Toy2}, where $Q_n(k,k)=\mathrm{var}(\lambda_k\psi_k)$ and $R_n$ was set to be a diagonal matrix with the variance of the histograms (measurements) on its diagonal.

Each experiment in each animal was analyzed and evaluated separately in order to avoid batch effects.

An example for the resulting representation of one experiment is presented in Figure \ref{fig:TruePos_DMK_color}. This figure presents $6$ identical plots containing the true x-y positions of the animal (marked by filled circles), where each plot is colored by a different DMK coordinate, according to $\mathrm{ln}\vert\hat{\psi}_i\vert$. 
We note that in this example, only $6$ eigenvectors were used as input to the DMK algorithm, since these $6$ coordinates provided a good representation for the position of the animal. 
In other experiments, $6$ eigenvectors were not always sufficient and therefore, for consistency, $20$ eigenvectors were used in all experiments in the following analysis.

\begin{figure*}
\centering
\includegraphics[width=0.75\textwidth]{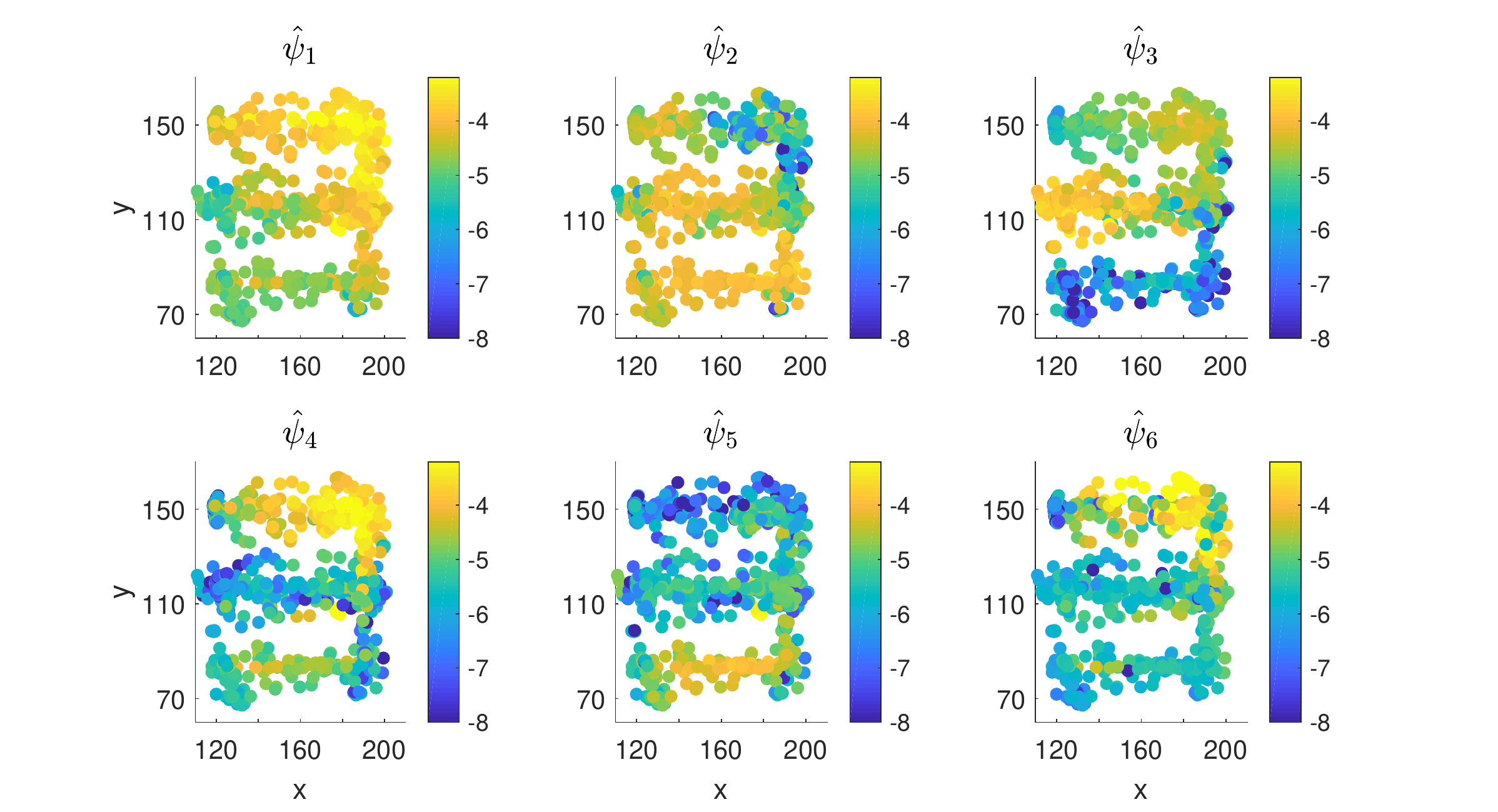}

\caption{True x-y positions of the animal in one experiment (filled circles), colored according to $6$ DMK coordinates ($ln\vert\hat{\psi}_i\vert$) calculated based on the spike rate histograms\label{fig:TruePos_DMK_color}}
\end{figure*}

Figure \ref{fig:TruePos_DMK_color} depicts that the DMK algorithm provides a meaningful representation for the animal location, since different 2D locations are highlighted in different coordinates (colored in yellow). 
By combining information from all coordinates, the 2D location can be inferred. For example, the middle arm in the W-shaped maze, $(x,y)\approx(160,110)$, is represented by high values (mostly) in coordinates $\hat{\psi}_2$ and $\hat{\psi}_3$, whereas the lower arm, $(x,y)\approx(160,80)$, is represented by high values (mostly) in coordinates $\hat{\psi}_2$ and $\hat{\psi}_5$.
In addition, Figure \ref{fig:TruePos_DMK_color} demonstrates that the obtained representation covers different regions in the 2D space in a relatively smooth manner. 

We emphasize that the DMK coordinates presented in Figure \ref{fig:TruePos_DMK_color} were obtained in a completely data-driven manner and with minimal model assumptions.

In order to quantitatively assess the quality of the new representation obtained by the DMK, we divided the data in each experiment into a train set, consisting of $80\%$ of the data, and a test set, consisting of $20\%$ of the data and performed cross-validation. 
We performed linear regression between the DMK coordinates representing the data in the train set and the true position of the animal and then used the regression parameters to estimate the animal's position based on the DMK representation of the test set.

An example for the resulting position estimation based on the test set of one experiment is presented in Figure \ref{fig:Position_trace}. 
This figure presents the estimated x position (top plot) and estimated y position (bottom plot) based on the DMK coordinates (in blue) compared with the true position (in black).
Figure \ref{fig:Position_trace} depicts that after the linear regression, the estimated position based on the DMK closely follows the true position of the animal.

\begin{figure}
\centering
\includegraphics[width=0.5\textwidth]{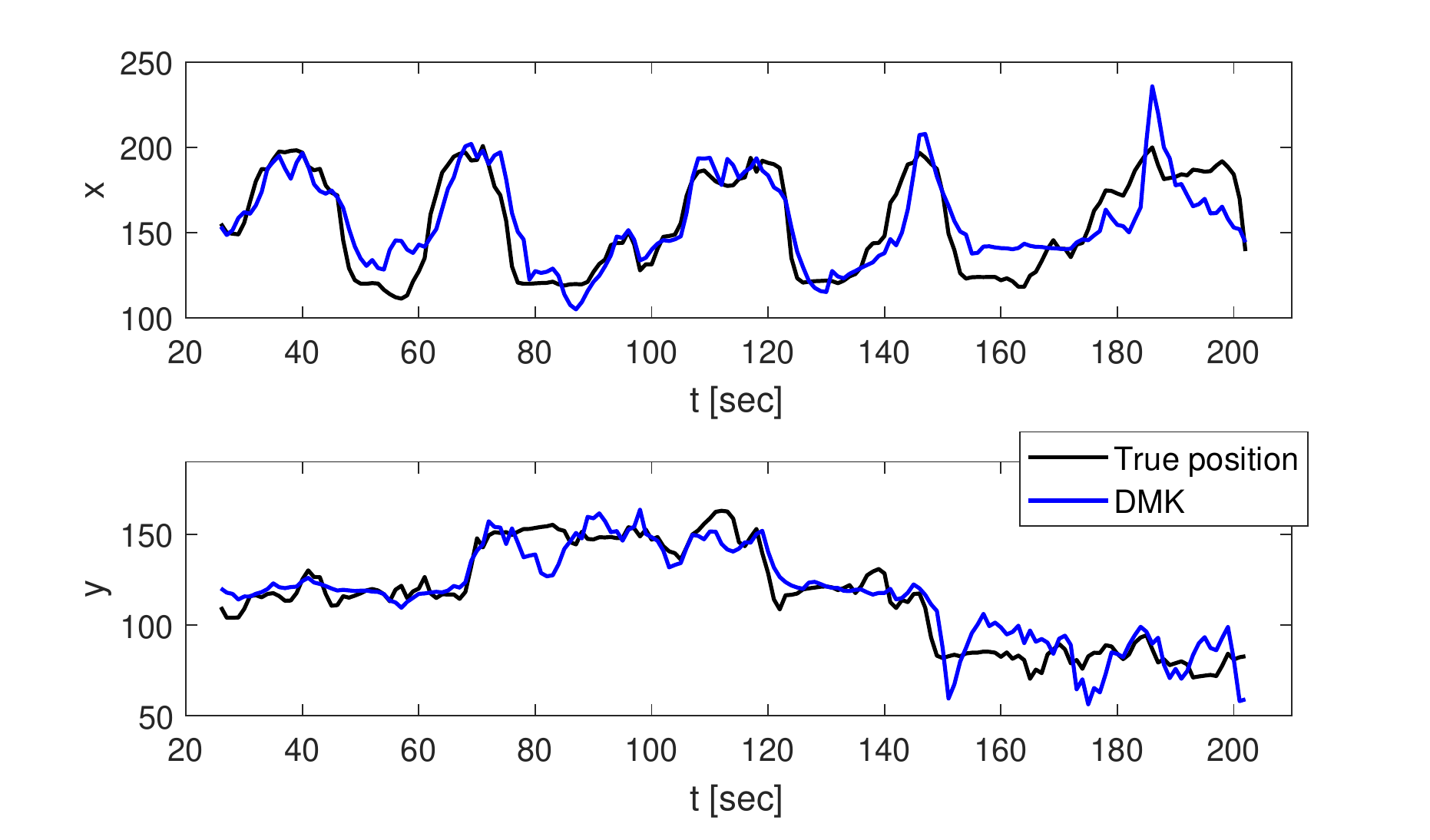}

\caption{Example of the estimated x-y positions of the animal in one experiment. The estimations are presented for the test-set, after applying linear regression to the DMK coordinates.\label{fig:Position_trace}}
\end{figure}

The corresponding 2D position of the animal in this experiment is presented in Figure \ref{fig:Position_2D}, where the position estimation based on the DMK coordinates is marked by blue 'x' and the true 2D position is marked by black circles. This plot depicts that after linear regression, the DMK coordinates are highly related to the 2D position of the animal in most locations. 
Note that the edge of the lower arm of the W-shaped maze is not represented properly by the DMK coordinates. This is consistent with the result in Figure \ref{fig:TruePos_DMK_color}, where none of the coordinates captures this specific location properly.

\begin{figure}
\centering
\includegraphics[width=0.4\textwidth]{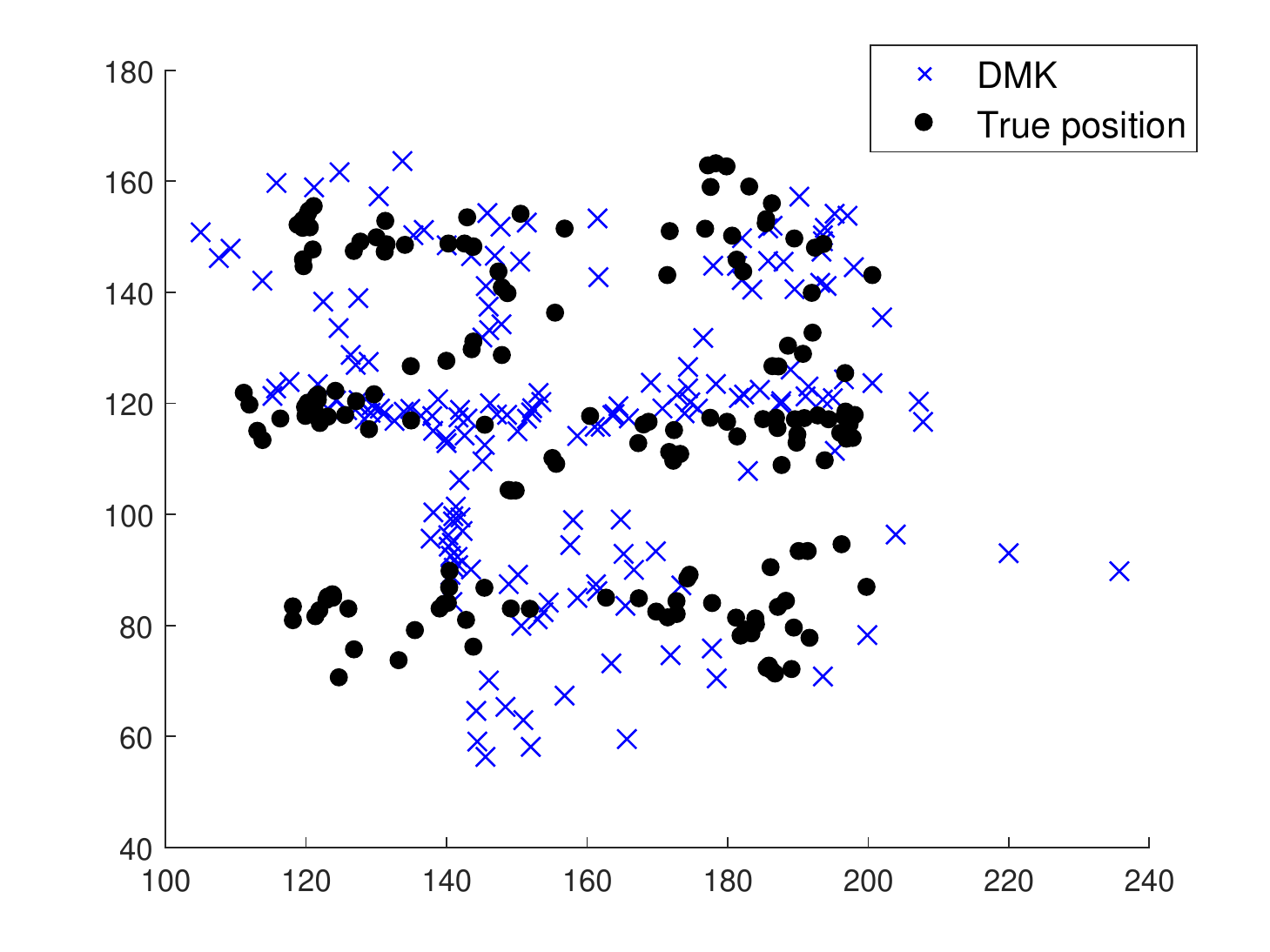}
\caption{Estimated 2D position in one experiment. The estimation is presented for the test-setm after applying linear regression to the DMK coordinates\label{fig:Position_2D}}
\end{figure}

For comparison, we applied linear regression to two additional data representations: (1) diffusion maps with no additional processing and (2) principal component analysis (PCA) applied to the spike rate histograms. 

We calculated the correlation between the estimated position and the true position for each experiment in all animals. 
Figure \ref{fig:Correlation_rat} presents the correlation values of the $x$ and $y$ positions, for each animal separately, based on the train set and on the test set. The average and standard deviation of the correlation values were calculated over 5-fold cross-validation, where the data was divided into $5$ consecutive segments.
This figure depicts that the DMK coordinates provide a meaningful representation that relates to the true position of the animal, since simply by linear regression, the DMK coordinates give rise to a good estimation of the true position. In addition, this representation is significantly better than the representations obtained by the diffusion maps coordinates and by PCA.

Note that the linear regression here was performed solely for a quantitative evaluation of the constructed coordinates. Our main result here is that the data driven DMK coordinates contain meaningful information regarding the location of the animal.

\begin{figure}
\centering
\includegraphics[width=0.5\textwidth]{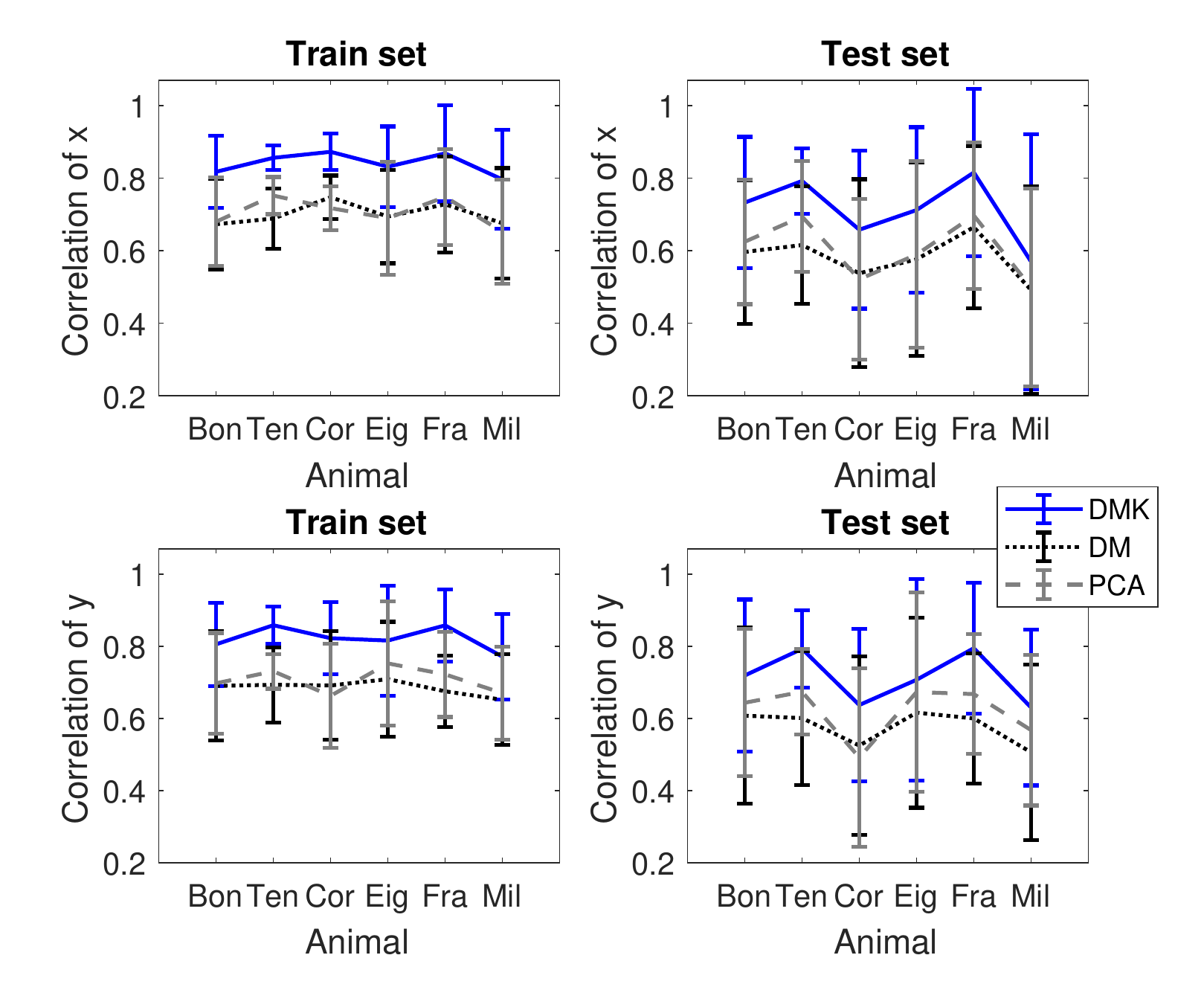}
\caption{Correlations between the estimated coordinates (after linear regression) and the true x-y positions, on the train-set and the test-set, for a 5-fold cross-validation.\label{fig:Correlation_rat}}
\end{figure}

We conclude by noting that in this example, the true system model is completely unknown and is not necessarily compatible with the assumed process form in \eqref{eq:Langevin}.
However, we demonstrated that our method can still be used to obtain a meaningful representation in this application.

\section{Conclusion\label{sec:Conclusion}}
In this work we addressed the analysis of high-dimensional nonlinear stochastic dynamical systems with measurement noise, and presented a non-parametric filtering framework in which a data-driven linear Kalman filter is constructed based on diffusion maps coordinates, utilizing their inherent dynamics and properties.
We showed that the presented framework recovers new coordinates that capture meaningful properties of the system given only a set of noisy measurements and with no further knowledge on the system properties. 
These coordinates may not be directly related to the underlying system state, yet they can be employed for filtering the measurements.
To obtain an estimate of a \emph{specific} underlying state, some alignment is required based on a training set, for example, using linear regression.

The theoretical justification of the proposed framework required few assumptions. 
First, the derivation of the recovered system dynamics was based on the assumption that the underlying system dynamics evolve according to gradient flows with constant diffusion.
Second, for compatibility of the recovered system with the Kalman filtering framework, we assumed that the leading diffusion maps coordinates are slowly evolving functions of the system state and are only mildly affected by the measurement noise. Several studies, e.g. \cite{Singer2009a,coifman2008diffusion} and \cite{dsilva2016data}, have shown that based on properties of the diffusion maps coordinates, this assumption commonly holds.
Third, we specified the conditions under which the devised system is observable and detectable, and showed that these conditions hold for an informed choice of the diffusion maps coordinates.
Although these assumptions are required for the theoretical derivations, the experimental results depict that our framework performs well in comparison to other non-parametric algorithms, even when these assumptions are not fully met.
We showed that the proposed method obtains an improved representation of dynamical systems compared with the diffusion maps coordinates and other non-parametric methods.

In the future, we plan to extend our method and address multi-modal data-sets arising from stochastic dynamical systems. We will devise methods for revealing the underlying common dynamics based on measurements from different sensors.

\section{Acknowledgements}
We would like to thank the Associate Editor and the anonymous reviewers for their helpful comments and suggestions.

\bibliographystyle{IEEEbib}
\bibliography{papers}

\end{document}